\newenvironment{proof}{\paragraph{Proof:}}{\hfill$\square$}
\begin{document}
\title{Dual Threats in RIS-Aided RF-UOWC Mixed Networks: Secrecy Performance Analysis under Simultaneous RF and UOWC Eavesdropping}

\author{
\IEEEauthorblockN{Md. Abdur Rakib, Md. Ibrahim,
\textit{Graduate Student Member, IEEE}, A. S. M. Badrudduza, \textit{ Member, IEEE}, Imran Shafique Ansari, \textit{Senior Member, IEEE}}}

\twocolumn[
\begin{@twocolumnfalse}
\maketitle
\begin{abstract}
\section*{Abstract}

In the dynamic realm of $6$G technology, emphasizing security is essential, particularly for optimizing high-performance communication. A notable strategy involves the use of reconfigurable intelligent surfaces (RISs), an emerging and cost-efficient technology aimed at fortifying incoming signals, broadening coverage, and ultimately improving the overall performance of systems. In this paper, we introduce a comprehensive framework to analyze the secrecy performance of an RIS-assisted mixed radio frequency (RF) - underwater optical wireless communication (UOWC) network. Here, all the RF links undergo $\alpha$-$\mu$ fading distribution, whereas the UOWC links experience a mixture of Exponential Generalized Gamma distribution. Specifically, we examine three potential eavesdropping situations: 1) eavesdropping on the RF link, 2) eavesdropping on the UOWC link, and 3) a simultaneous eavesdropping attack affecting both RF and UOWC links. To achieve this, we derive novel mathematical expressions such as average secrecy capacity, secrecy outage probability, strictly positive secrecy capacity, and effective secrecy throughput in closed form. Using these derived expressions, we carry out an investigation to assess the influences of fading parameters, pointing errors, receiver detection technique, underwater turbulence severity, and water salinity on the system. Furthermore, our study investigates the significance of RIS in improving secrecy performance due to the proposed model. To provide deeper insights, we also perform asymptotic analysis for the high signal-to-noise region. Finally, to verify our analytical results, we conduct Monte Carlo simulation using a computer-based technique.
\end{abstract}

\begin{IEEEkeywords}
\section*{Keywords} 

Reconfigurable intelligent surface (RIS), $\alpha-\mu$ channel, underwater turbulence, and secrecy outage probability.
\end{IEEEkeywords}
\end{@twocolumnfalse}
]
\section{INTRODUCTION}
\subsection{Background}
With the emergence of the $6$G wireless network, reconfigurable intelligent surfaces (RISs) are increasingly recognized as a pivotal technology, particularly in the realms of radio frequency (RF) and underwater optical communication (UOWC) links. These surfaces are composed of an array of cost-effective passive elements adept at modulating the amplitude, frequency, phase, and polarization of incoming signals in real-time, thereby establishing a programmable propagation environment\cite{art14}. The use of RIS techniques shows significant potential in multiple areas, particularly for enhancing the signal-to-noise ratio (SNR), increasing signal coverage, and improving beamforming in multi-user channels\cite{art16}.
In modern times, UOWC has attracted considerable attention due to its incorporation of various communication techniques with RF, establishing itself as a prospective advancement for the next generation of cellular networks\cite{art17}.
Through the integration of RIS in both RF and optical transmissions, the persistent challenge of signal blockage, inherent in wireless communication, can be effectively mitigated. 
\subsection{Literature Study}
As wireless networks rapidly evolve, researchers increasingly highlight the importance of security concerns. In particular, $6$G Internet of Things (IoT) networks face significant security challenges due to the broadcast nature of wireless channels and the transmission of large volumes of sensitive data. Consequently, physical layer security (PLS) has emerged as a leading and highly effective method for ensuring secure communication \cite{art42, art43, art44, lo6}. Radio frequency (RF) communication offers advantages like longer range and penetration through obstacles, whereas underwater optical wireless communication (UOWC) provides higher data rates and lower latency. Recently, dual-hop RF-UOWC networks have been deployed to leverage the strengths of each technology while mitigating their respective limitations \cite{art30, art31, art32, lo7}. The secrecy performance of mixed RF-UOWC was analyzed in \cite{art5}, considering potential eavesdroppers on the RF link. The authors derived closed-form expressions for various secrecy metrics, such as secrecy outage probability (SOP), average secrecy capacity (ASC), and the probability of strictly positive secrecy capacity (SPSC). Another study in \cite{art41} analyzed a similar model to enhance confidentiality, where eavesdroppers attempt to intercept the RF link without being aware of the antenna selection scheme. Additionally, research in \cite{new4} conducted secrecy outage analysis for decode-and-forward (DF) relay-based networks using an $\alpha-\mu$ distribution, which is a versatile channel model capable of simulating a wider range of realistic physical scenarios. Recently, the impact of fading severity and various turbulence scenarios was examined in \cite{art16}.

Due to their significant impact on modern wireless communication networks, RISs and their typical applications have garnered considerable interest among researchers. The studies in \cite{art18,art19,art20,art21,new5,new6,new7,lo2,lo3} focus on single-hop RIS-assisted systems. In \cite{art18}, researchers conducted a comprehensive coverage analysis of RIS-aided communication systems, demonstrating that RIS deployment can expand coverage, enhance SNR, and reduce transmission delays. The performance of RIS-assisted single-input single-output networks was assessed in \cite{art19}. Further examination of RIS capabilities was performed in \cite{art20}, where the performance of wireless communication systems with RIS and two different phase shifting designs was analyzed, particularly in Nakagami-m fading channels. The study in \cite{art21} highlighted the improvement in secrecy performance achievable through RIS implementation in wireless communication networks, underscoring its role in enhancing security. Another study, \cite{new5}, investigated the secrecy performance of RIS-aided systems over Rayleigh fading channels, revealing the significant impact of the number of reflecting elements on system secrecy performance by analyzing the exact closed-form expressions for SOP, intercept probability (IP), and the probability of non-zero secrecy capacity (PNZSC).

In recent years, numerous studies have explored the effectiveness of dual-hop RIS-aided systems \cite{art16,ahmed2023enhancing,new9,new10,new11,art22,art23,art26,art27,art28,art29,10413214,art24,lo4,lo5}. One such study in \cite{art16}, analyzed the security aspects of RF-UWOC networks, demonstrating how integrating RIS into the RF channel can enhance secure data transmission, especially in scenarios with potential eavesdroppers. In \cite{ahmed2023enhancing}, researchers investigated the physical layer security of RIS-assisted mixed RF-FSO frameworks across three different eavesdropping scenarios. A study presented in \cite{new9} addressed RIS-based jamming techniques to prevent eavesdropping in mixed RF-FSO systems, considering Nakagami-$m$ fading for RF links and Málaga ($\mathcal{M}$) turbulence for FSO links. They evaluated secrecy metrics such as secrecy outage probability and average secrecy capacity, thoroughly examining various system parameters to assess their impact on security performance. The authors in \cite{new10} analyzed the secrecy performance of RIS-assisted systems under simultaneous eavesdropping in both RF and FSO links, using metrics including ASC, SOP, probability of SPSC, effective secrecy throughput (EST), and IP.
Furthermore, the existing literature provides limited insights into the performance of RIS-assisted dual-hop systems. In \cite{art26}, the performance of an RIS-equipped source in a mixed RF-FSO relay network was evaluated, showing that the RIS-equipped source outperformed the RIS-aided scenario. The impact of co-channel interference (CCI) on an RIS-aided dual-hop FSO-RF system was examined in \cite{art27}. Authors in \cite{art29} proposed a RIS-based mixed FSO-RF system, taking into account pointing errors and atmospheric turbulence, and found that RIS could significantly enhance system performance based on their diversity order analysis. Another study, \cite{art24}, introduced and analyzed a RIS-empowered dual-hop mixed RF-UOWC model using various detection techniques.

\subsection{Motivation and Contributions}
In light of the aforementioned studies, the prevailing literature commonly highlights a specific emphasis on RF and UOW systems, acknowledging their significant impact on the evolving landscape of 6G wireless communication technology. While RF channels in wireless communication offer advantages such as reliable signal penetration, extended communication ranges, and established infrastructure, facilitating widespread adoption and connectivity, they are susceptible to interference, limited bandwidth, and potential signal degradation in crowded or noisy environments, negatively impacting overall communication reliability and performance\cite{new1}. In contrast, UOWC holds the potential for achieving high data rates, bandwidth, and transmission security, along with substantial transmission distances\cite{art31}. However, point-to-point UOWC is often susceptible to the effects of underwater turbulence (UWT) \cite{new3}. Consequently, a mixed dual-hop relaying system emerges as a viable solution to address the inherent limitations of single-hop transmission, leveraging the advantages of both types of links. While the current literature predominantly explores mixed RF-UOWC systems, there is a notable scarcity of research on their potential to preserve secrecy. Furthermore, existing research lacks a comprehensive investigation into the security aspects of RIS-assisted RF-UOW systems, particularly in scenarios involving simultaneous eavesdropping. This gap is especially pronounced in the context of physical layer security (PLS), where RIS is utilized in both communication links. The analysis of secrecy performance in RIS-assisted mixed RF-UOWC systems under simultaneous eavesdropping scenarios remains largely unexplored.
In this paper, we introduce a novel framework for a RIS-aided mixed RF-UOWC system, where the RF link experiences $\alpha-\mu$ fading and the UOWC link follows an mEGG distribution. The $\alpha-\mu$ distributions can account for diverse fading channels, and the mEGG model incorporates various real-world physical transformations, including underwater turbulence, air bubbles, temperature gradients, and water salinity \cite{zedini2019unified,art5}. A strategically positioned relay node between the source and the user captures signals from the source, converts them into optical signals, and then transmits them to the intended user through the UOWC link. The key contributions of this study are outlined as follows:
\begin{itemize}
\item Firstly, we formulate the closed-form expressions for the cumulative distribution function (CDF) and probability distribution function (PDF) of the dual-hop RF-UOWC network, utilizing the CDF and PDF of individual links. Furthermore, we demonstrate the analytical expressions for ASC, SOP, SPSC, and EST in closed form. Notably, when compared to prior research, these formulations exhibit novelty as the proposed model stands out uniquely from the current body of literature.

\item Utilizing the obtained analytical expressions, we produce numerical outcomes along with accompanying figures. To validate the accuracy of these analytical results, we incorporate Monte-Carlo simulation, thereby enhancing the robustness of the analysis. Additionally, to gain deeper physical insights from the analytical results, we derive asymptotic expressions for the performance metrics in the high SNR regime.
\item To conduct a more practical and realistic analysis, we consider fundamental impairments and influential factors in both the RF and UOW links, influencing the secrecy performance of the proposed model. These factors encompass the effects of average SNR, the quantity of RIS reflecting elements, fading parameters of the RF link, environmental variables such as air bubbles, temperature gradients, water salinity, and the pointing error of the UOW link.
\end{itemize}
\subsection{Organization}
The organization of this paper is as follows: Section \ref{model} details the system and channel models. In Section \ref{spa}, we derive closed-form expressions for ASC, SOP, SPSC, and EST. Section \ref{NR} presents the analytical and simulation results. Finally, Section \ref{con} provides concluding remarks.


\section{SYSTEM MODEL AND PROBLEM FORMULATION}
\label{model}
\begin{figure*}[!ht]
   \vspace{-5mm}
       \centerline{\includegraphics[width=0.7\textwidth,angle =0]{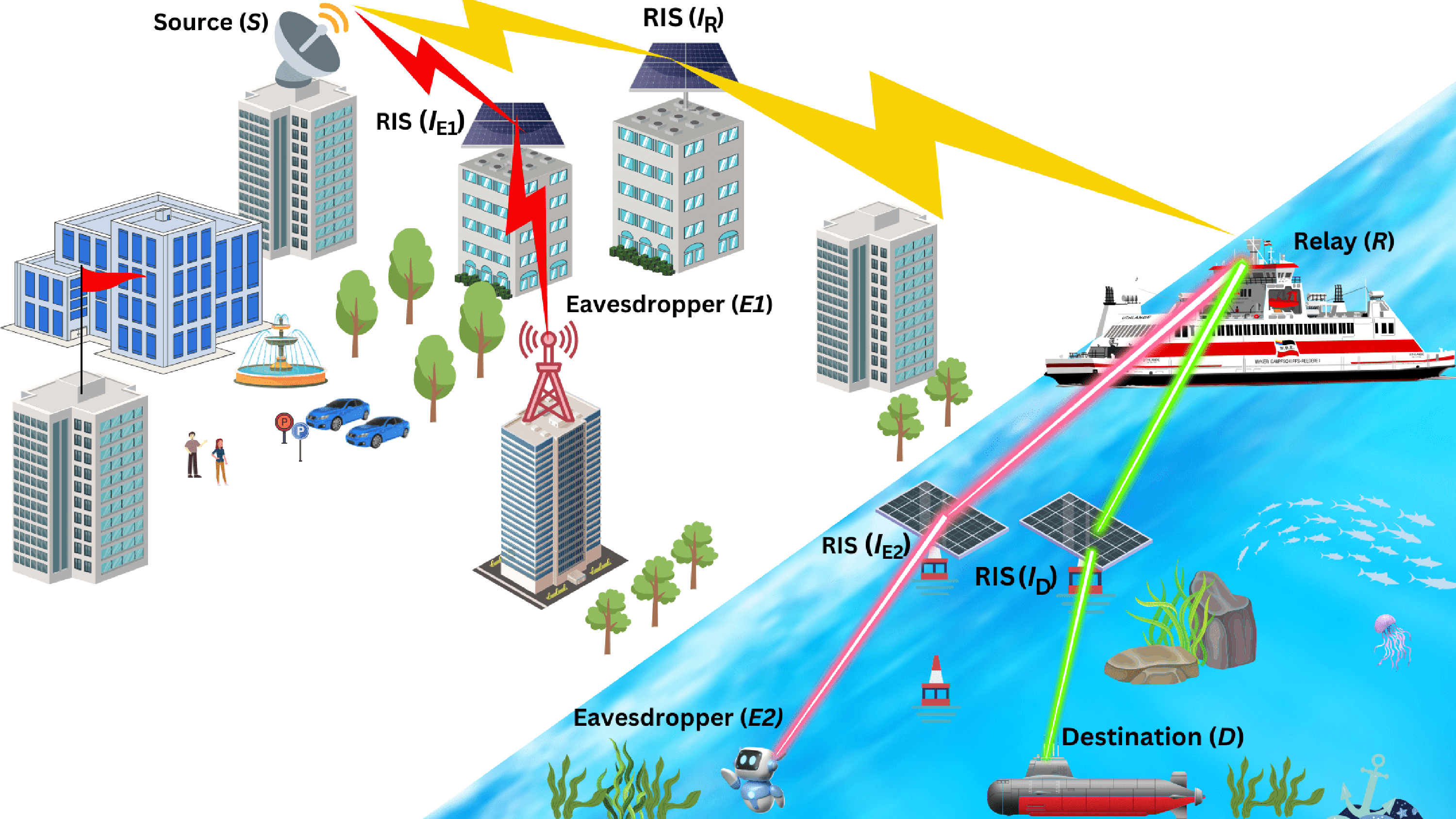}}
         \vspace{0mm}
    \caption{The system model incorporates a source ($\mathbf{S}$), a relay ($\mathcal{R}$), a destination user ($\mathcal{D}$), and two eavesdroppers ($\mathcal{E}1$ and $\mathcal{E}2$), along with four RISs ($I_{\mathcal{R}}$, $I_{\mathcal{D}}$, $I_{\mathcal{E}1}$, and $I_{\mathcal{E}2}$).}
       \label{mod}
   \end{figure*}
We consider a dual-hop RIS-assisted RF-UOWC mixed communication model as shown in Fig.\ref{mod}. In the first hop, we utilize a RIS-RF system, while in the second hop, we employed a RIS-UOWC system. Due to the presence of physical obstacle, it is not feasible for source, $\mathcal{S}$ to create direct communication link with relay, $\mathcal{R}$. Therefore, $\mathcal{S}$ transmits signal to $\mathcal{R}$ by employing a RIS, $\mathcal{I}_{\mathcal{R}}$ placed on the top of a building. Likewise, another RIS, $\mathcal{I}_{\mathcal{D}}$ facilitates the optical communication between $\mathcal{R}$ and destination, $\mathcal{D}$. The eavesdroppers (i.e., $\mathcal{E}1$ and $\mathcal{E}2$) try to intercept the confidential information being transmitted from $\mathcal{S}$ to $\mathcal{D}$ via two different RISs (i.e., $\mathcal{I}_{\mathcal{E}1}$ and $\mathcal{I}_{\mathcal{E}2}$). Based on the eavesdroppers location, three distinct scenarios are considered in this paper. 
\begin{itemize}
    \item Scenario-$I$: It is assumed that eavesdropper ($\mathcal{E}1$) is located near $\mathcal{R}$. As a result, $\mathcal{E}1$ tries to receive confidential information via RF link. It is important to mention that $\mathcal{E}1$ utilizes different RIS ($\mathcal{I}_{\mathcal{E}1}$) to obtain the secret information from $\mathcal{S}$.
    
    \item Scenario-$II$: Unlike Scenario-$I$, eavesdropper ($\mathcal{E}2$) is located near $\mathcal{D}$, therefore utilizes UOWC link to thieve information. In this case, $\mathcal{E}2$ is capable of intercepting information via $\mathcal{I}_{\mathcal{E}2}$ from $\mathcal{D}$.  
    
    \item Scenario-$III$: In this scenario, both the eavesdroppers ($\mathcal{E}1$ and $\mathcal{E}2$) can intercept private information utilizing RF and UOWC links simultaneously. 
\end{itemize}

In the proposed model, $\mathcal{S}$ and $\mathcal{E}1$ each comprise a single antenna, whereas $\mathcal{R}$ functions as a transceiver with one transmitting and one receiving antenna. Conversely, $\mathcal{D}$ has a single photo detector to receive optical signals, whereas $\mathcal{I}_{\mathcal{R}}$, $\mathcal{I}_{\mathcal{D}}$, $\mathcal{I}_{\mathcal{E}1}$, and $\mathcal{I}_{\mathcal{E}2}$ are equipped with $N_{t}$, $N_{d}$, $N_{e_{s}}$, and $N_{e_{r}}$ reflecting elements, respectively. During the first hop, $\mathcal{S}$ transmits signal to $\mathcal{R}$ in the presence of $\mathcal{E}1$. We assume all RF links (i.e., $\mathcal{S}-\mathcal{I}_{\mathcal{R}}-\mathcal{R}$ and $\mathcal{S}-\mathcal{I}_{\mathcal{E}1}-\mathcal{E}1$) experience $\alpha$-$\mu$ fading distributions. It is noteworthy that $\mathcal{R}$ is used to convert the received RF signal into the corresponding optical format and subsequently transmitting it towards $\mathcal{D}$ under the influence of $\mathcal{E}2$. Here, we assume the UOWC links (i.e., $\mathcal{R}-\mathcal{I}_{\mathcal{D}}-\mathcal{D}$ and $\mathcal{R}-\mathcal{I}_{\mathcal{E}2}-\mathcal{E}2$) are subjected to mEGG distribution.

\subsection{SNRs of Individual Links}
For scenario-$I$, the channel gains for $\mathcal{S}-\mathcal{I}_{\mathcal{R}}$ and $\mathcal{S}-\mathcal{I}_{\mathcal{E}1}$ links are denoted by $g_{l_{1}} \left(l_{1}=1,2,3,...,N_{t}\right)$ and $f_{l_{2}}\left(l_{2}=1,2,3,...,N_{e_{s}}\right)
$, respectively. Therefore, the channel gains for $\mathcal{I}_{\mathcal{R}}-\mathcal{R}$ and $\mathcal{I}_{\mathcal{E}1}-\mathcal{E}1$ links are denoted by $p_{l_{1}}$ and $q_{l_{2}}$, respectively.
In terms of these relevant channels, $g_{l_{1}}=\nu_{l_{1}}e^{i\Phi_{l_{1}}}$, $f_{l_{2}}=\beta_{l_{2}}e^{i\Phi_{l_{2}}}$,
$p_{l_{1}}=\eta_{l_{1}}e^{i\varphi_{l_{1}}}$ and 
$q_{l_{2}}=\delta_{l_{2}}e^{i\varphi_{l_{2}}}$,
where, $\nu_{l_{1}}$, $\beta_{l_{2}}$,
$\eta_{l_{1}}$ and $\delta_{l_{2}}$ are $\alpha-\mu$ distributed random variables (RVs), $\Phi_{l_{1}}$, $\Phi_{l_{2}}$, $\varphi_{l_{1}}$ and $\varphi_{l_{2}}$ are the identical phases of channel gains for the received signal. Additionally, the RISs produce controllable phases $\theta_{l_{1}}$ and $\Theta_{l_{2}}$ using the reflecting elements indexed as $l_{1}$ and $l_{2}$, respectively. The transmitted signal, denoted as x, originates from $\mathcal{S}$ and carries a power $P_{s}$, $w_{r}\sim\Tilde{\mathcal{C}}(0,W_{r})$ and $w_{e_{1}}\sim\Tilde{\mathcal{C}}(0,W_{e_{1}})$ are the additive white Gaussian noises at $\mathcal{R}$ and $\mathcal{E}1$ with noise power $W_{r}$ and $W_{e_{1}}$, respectively. 
Now, the instantaneous SNR at $\mathcal{R}$ and $\mathcal{E}1$ can be written as
\begin{align}
    \label{snr5}
\gamma_{t}=&\left[\sum_{l_{1}=0}^{N_{t}}\nu_{l_{1}}\eta_{l_{1}}e^{i(\theta_{l_{1}}-\Phi_{l_{1}}-\varphi_{l_{1}})}\right]^{2}\Omega_{t},
\\
   \label{snr6}
\gamma_{e_{s}}=&\left[\sum_{l_{2}=0}^{N_{e_{s}}}\beta_{l_{2}}\delta_{l_{2}}e^{i(\Theta_{l_{2}}-\Phi_{l_{2}}-\varphi_{l_{2}})}\right]^{2}\Omega_{e_{s}},
\end{align}
where $\Omega_{t}=\frac{P_{s}}{W_{r}}$ and $\Omega_{e_{s}}=\frac{P_{s}}{W_{e_{1}}}$ are the average SNRs of $\mathcal{S}-\mathcal{I}_{\mathcal{R}}-\mathcal{R}$ and $\mathcal{S}-\mathcal{I}_{\mathcal{E}1}-\mathcal{E}1$ links, respectively. 
 In order to maximize the instantaneous SNR, the optimal values of $\theta_{l_{1}}$ and $\Theta_{l_{2}}$ are chosen as $\theta_{l_{1}}=\Phi_{l_{1}}+\varphi_{l_{1}}$ and $\Theta_{l_{2}}=\Phi_{l_{2}}+\varphi_{l_{2}}$. As a result, the maximum achievable SNR at $\mathcal{R}$ and $\mathcal{E}1$ can be denoted as
\begin{align}
    \label{snr7}
\gamma_{t,max}=&\left(\sum_{l_{1}=0}^{N_{t}}\nu_{l_{1}}\eta_{l_{1}}\right)\Omega_{t},
\\
   \label{sn8}
\gamma_{e_{s},max}=&\left(\sum_{l_{2}=0}^{N_{e_{s}}}\beta_{l_{2}}\delta_{l_{2}}\right)\Omega_{e_{s}}.
\end{align}
For scenario-$II$, the identical form as \eqref{snr7} and \eqref{sn8} can be utilized to express the maximum instantaneous SNR at $\mathcal{D}$ and $\mathcal{E}2$ as
\begin{align}
    \label{snr8}
\gamma_{d,max}&=\left(\sum_{l_{3}=0}^{N_{d}}\rho_{l_{3}}\psi_{l_{3}}\right)\Omega_{r_{d}},
\\
   \label{sn9}
\gamma_{e_{r},max}&=\left(\sum_{l_{4}=0}^{N_{e_{r}}}\varrho_{l_{4}}\Psi_{l_{4}}\right)\Omega_{r_{e_{r}}},
\end{align}
where $\rho_{l_{3}}$, $\psi_{l_{3}}$, $\varrho_{l_{4}}$ and $\Psi_{l_{4}}$ are the mEGG distributed random variables (RVs), $\Omega_{r_{d}}$ and $\Omega_{r_{e_{r}}}$ are the average SNRs of $\mathcal{R}-\mathcal{I}_{\mathcal{D}}-\mathcal{D}$ and $\mathcal{R}-\mathcal{I}_{\mathcal{E}2}-\mathcal{E}2$ links, respectively. Hence, assuming variable gain amplify-and-forward (AF) relaying, the instantaneous SNR of the proposed model  can be written as 
\begin{align}
    \gamma_{eq}\simeq min\{\gamma_{t},\gamma_{d}\}.
\end{align}

\subsection{Statistical Characteristics of RF Links}
We assume all the RF links (i.e., $\mathcal{S}-\mathcal{I}_{\mathcal{R}}-\mathcal{R}$ and $\mathcal{S}-\mathcal{I}_{\mathcal{E}1}-\mathcal{E}1$) are subjected to $\alpha-\mu$ fading distributions. Therefore, the PDF and CDF of $\gamma_{j}$ can be written, respectively as \cite{d1}
\begin{align}
  \label{t1}
f_{\gamma_{j}}(\gamma)=&\frac{(\frac{l_{j}}{\Omega_{j}})^{\frac{u_{j}}{2}} }{2\Gamma(u_{j})v_{j}^{u_{j}}}\gamma^{\frac{u_{j}}{2}-1}G_{0,1}^{1,0}\left[\sqrt{\frac{l_{j}\gamma}{\Omega_{j}v_{j}^{2}}}\biggl | 
    \begin{array}{c}
     -\\
     0 \\ 
    \end{array}
    \right],
\\
\label{t2}
   F_{\gamma_{j}}(\gamma)=&1-\sum_{\eta_{j}=0}^{u_{j}-1}\frac{\gamma^{\frac{\eta_{j}}{2}}}{\eta_{j}!}\left(\sqrt{\frac{l_{j}}{\Omega_{j}v_{j}^{2}}}\right)^{\eta_{j}}G_{0,1}^{1,0}\left[\sqrt{\frac{l_{j}\gamma}{\Omega_{j}v_{j}^{2}}}\biggl | 
    \begin{array}{c}
     -\\
     0 \\ 
    \end{array}
    \right],
\end{align}
where j $\in$ \{t, $e_{s}$\},
t and $e_{s}$ represent the $\mathcal{S}-\mathcal{I}_{\mathcal{R}}-\mathcal{R}$ and $\mathcal{S}-\mathcal{I}_{\mathcal{E}1}-\mathcal{E}1$ links, respectively, $l_{j}$ denotes the path loss, $\Omega_j$ denotes the average SNR, $G_{p,q}^{m,n}\left[\cdot \right]$ represents the Meijer's $G$ function as defined in \cite{art5}. Now, the shape parameter, $u_{j}$ and the scale parameter, $v_{j}$ can be calculated as
$u_{j}=\frac{N_{j}\mathbb{E}(M_{j})^{2}}{Var(M_{j})}$
and 
$v_{j}=\frac{Var(M_{j})}{{\mathbb{E}(M_{j})}}$,
where $N_j$ denotes the number of reflecting elements of the RISs. Here, $\mathbb{E}(M_{j})$ is associated with the $k^{th}$ moment of $M_{j}$. Assuming $M_{j}$ being the random variable, the $k^{th}$ moment of $M_{j}$ can be expressed as \eqref{E1}, shown at the top of this page,
where $\alpha_{j,s}$ and $\mu_{j,s}$ denote the fading parameters, and $\phi_{j,s}$ denotes the scale parameter due to the $\mathcal{S}-\mathcal{I}_\mathcal{R}$ and $\mathcal{S}-\mathcal{I}_{\mathcal{E}1}$ links, respectively. Similarly, $\alpha_{j,r}$, $\mu_{j,r}$, and $\phi_{j,r}$ have the same definitions as stated earlier but for the $\mathcal{I}_\mathcal{R} - \mathcal{R}$ and $\mathcal{I}_{\mathcal{E}1} - \mathcal{E}1$ link, respectively. Additionally, the four constants (i.e., $\lambda_1$, $\lambda_2$, $\lambda_3$, and $\lambda_4$) assumed in \eqref{E1}, are expressed as follows: 
\begin{align} \nonumber
    &\lambda_{1}=\mu_{j,s}-\frac{\alpha_{j,s}\mu_{j,s}-\alpha_{j,r}\mu_{j,r}}{2\alpha_{j,s}}-\frac{2k+\alpha_{j,s}\mu_{j,s}+\alpha_{j,r}\mu_{j,r}}{4},
    \\ \nonumber
    & \lambda_{2}=\mu_{j,r}+\frac{\alpha_{j,s}\mu_{j,s}-\alpha_{j,r}\mu_{j,r}}{2\alpha_{j,s}}-\frac{2k+\alpha_{j,s}\mu_{j,s}+\alpha_{j,r}\mu_{j,r}}{4},
\end{align}
\vspace{-5mm}
\begin{align} \nonumber
\lambda_{3}=&\frac{(\alpha_{j,s}\mu_{j,s}-\alpha_{j,r}\mu_{j,r})}{2}+\frac{\alpha_{j,s}(2k+\alpha_{j,s}\mu_{j,s}+\alpha_{j,r}\mu_{j,r})}{4}
\\ \nonumber
&-\alpha_{j,s}\mu_{j,s},
\end{align}
\vspace{-5mm}
\begin{align} \nonumber
\lambda_{4}=&\frac{\alpha_{j,r}(2k+\alpha_{j,r}\mu_{j,r}+\alpha_{j,s}\mu_{j,s})}{4}-\frac{(\alpha_{j,s}\mu_{j,s}-\alpha_{j,r}\mu_{j,r})}{2}
\\ \nonumber
&-\alpha_{j,r}\mu_{j,r}.
\end{align}
\setcounter{eqnback}{\value{equation}}
\setcounter{equation}{9}
\begin{figure*}
\begin{align}
\label{E1}
    \mathbb{E}(M_{j}^{k})=&\frac{\alpha_{j,r}(\mu_{j,s})^{\lambda_{1}}(\mu_{j,r})^{\lambda_{2}}
(\phi_{j,s})^{\lambda_{3}}(\phi_{j,r})^{\lambda_{4}}}{2\Gamma(\mu_{j,s})\Gamma(\mu_{j,r})}
\Gamma\left(\frac{\alpha_{j,s}\mu_{j,s}+\alpha_{j,r}\mu_{j,r}}{4}+\frac{\alpha_{j,s}\mu_{j,s}-\alpha_{j,r}\mu_{j,r}}{2\alpha_{j,s}}+\frac{k}{2}\right)
\nonumber
\\  
\times&\Gamma\left(\frac{\alpha_{j,s}\mu_{j,s}+\alpha_{j,r}\mu_{j,r}}{4}-\frac{\alpha_{j,s}\mu_{j,s}-\alpha_{j,r}\mu_{j,r}}{2\alpha_{j,s}}+\frac{k}{2}\right),
\end{align}
\hrulefill
\end{figure*}
\setcounter{eqnback}{\value{equation}}
\setcounter{equation}{12}
\begin{figure*}[!b]
\hrulefill
\begin{align}
    \label{E2}
    \mathbb{E}(M_{n}^{k})&=\frac{\omega_{n,s}\omega_{n,r}\xi_{n,s}^{2}\xi_{n,r}^{2}(A_{n,s}A_{n,r}\lambda_{n,s}\lambda_{n,r})^{k}\prod_{\iota=1}^{4}\Gamma(k+R_{1n,\iota})}{\prod_{\iota=1}^{2}\Gamma(k+Q_{1n,\iota})}
 \nonumber
 \\    
 &+\frac{\omega_{n,s}(1-\omega_{n,r})\xi_{n,s}^{2}\xi_{n,r}^{2}(A_{n,s}A_{n,r}b_{n,r}c_{n,r}\lambda_{n,s})^{kc_{n,r}}\prod_{\iota=1}^{4}\Gamma\left(k+R_{2n,\iota}\right)}{(2\pi)^{\frac{1}{2}(c_{n,r}-1)}c_{n,r}^{\frac{1}{2}}\Gamma(a_{n,r})\prod_{\iota=1}^{2}\Gamma\left(k+Q_{2n,\iota}\right)}
    \nonumber
    \\
 &+\frac{\omega_{n,r}(1-\omega_{n,s})\xi_{n,s}^{2}\xi_{n,r}^{2}(A_{n,r}A_{n,s}b_{n,s}c_{n,s}\lambda_{n,r})^{kc_{n,s}}\prod_{\iota=1}^{4}\Gamma(k+R_{3n,\iota})}{(2\pi)^{\frac{1}{2}(c_{n,s}-1)}c_{n,s}^{\frac{1}{2}}\Gamma(a_{n,s})\prod_{\iota=1}^{2}\Gamma(k+Q_{3n,\iota})}
\nonumber     
\\     
&+\frac{(1-\omega_{n,s})(1-\omega_{n,r})\xi_{n,s}^{2}\xi_{n,r}^{2}(A_{n,s}b_{n,s})^{kc_{n,s}}(A_{n,r}b_{n,r})^{kc_{n,r}}\prod_{\iota=1}^{4}\Gamma\left(k+R_{4n,\iota}\right)}{\Gamma(a_{n,s})\Gamma(a_{n,r})\prod_{\iota=1}^{2}\Gamma\left(k+Q_{4n,\iota}\right)},
\end{align}
\end{figure*}
\begin{remark}
    In \cite{ibrahim2023effective}, a comprehensive analysis of $\alpha$-$\mu$ fading channels is presented, proposing the utilization of $\alpha$-$\mu$ distributions in RF links to provide the advantages observed in modeling other multipath fading channels. Moreover, by assuming $\alpha$-$\mu$ channels, we can incorporate many other existing works \cite{art16,ahmed2023enhancing}.
\end{remark}
\subsection{Statistical Characteristics of UOWC Links}   
We consider all the UOWC links to experience mEGG distributions. Therefore, the PDF and CDF of UOWC links can be written as \cite{10413214}
\setcounter{eqnback}{\value{equation}}
\setcounter{equation}{10}
\begin{align}
\label{d1}
f_{\gamma_{n}}(\gamma)=&\frac{\mathcal{M}_{1,n}}{r_{n}}\gamma^{\frac{u_{n}}{r_{n}}-1}G_{0,1}^{1,0}\left[\frac{\mathbb{E}(M_{n})}{v_{n}\Omega_{r_{n}}^{\frac{1}{r_{n}}}}\gamma^{\frac{1}{r_{n}}}\biggl | 
    \begin{array}{c}
     - \\
     0 \\ 
    \end{array}
    \right],
\\
\label{d2}
F_{\gamma_{n}}
(\gamma)=&\frac{\mathcal{M}_{1,n}\gamma^{\frac{u_{n}}{r_{n}}}}{\sqrt{r_{n}}(2\pi)^{\frac{r_{n}-1}{2}}}
\nonumber
\\
\times &G_{1,r_{n}+1}^{r_{n},1}\left[\frac{\mathbb{E}(M_{n})^{r_{n}}\gamma}{\Omega_{r_{n}}(v_{n}r_{n})^{r_{n}}}\biggl | 
    \begin{array}{c}
     1-\frac{u_{n}}{r_{n}} \\
     0,\frac{r_{n}-1}{r_{n}},\frac{-u_{n}}{r_{n}} \\ 
    \end{array}
    \right],
\end{align}
where $n$ $\in$ \{$d$, $e_{r}$\},
$d$ and $e_{r}$ denote the links corresponding to $\mathcal{R}-\mathcal{I}_{\mathcal{D}}-\mathcal{D}$ and $\mathcal{R}-\mathcal{I}_{\mathcal{E}2}-\mathcal{E}2$, respectively, $\mathcal{M}_{1,n}=\frac{\mathbb{E}(M_{n})^{u_{n}}}{\Gamma(u_{n})v_{n}^{u_{n}}\mu_{r_{n}}^{\frac{u_{n}}{r_{n}}}}$, $\Omega_{r_{n}}$ is the electrical SNR which is defined as $\Omega_{r_{n}}=\frac{(\eta_{0}\mathbb{E}(M_{n}))^{r_{n}}}{\epsilon}$, $r_{n}$ represents the receiver detection technique, $r_{n}=1$ signifies the heterodyne detection (HD), whereas $r_{n}=2$ denotes the intensity modulation/direct detection (IM/DD) technique, $\eta_{o}$ is the conversion co-efficient parameter, $\epsilon$ is related to variance of AWGN. Now, the shape parameter ($u_{n}$) and the scale parameter ($v_n$) of the UOWC links can be expressed, respectively as $u_{n}=\frac{N_{n}\mathbb{E}(M_{n})^{2}}{Var(M_{n})}$
    and
$v_{n}=\frac{Var(M_{n})}{{\mathbb{E}(M_{n})}}$, where $N_{n}$ denotes the number of reflecting elements in the RIS. Now, considering $M_{n}$ as the random variable, the $k^{th}$ moment of $M_{n}$ can be written as \eqref{E2}, where $a_{n,s}, b_{n,s}$, and $c_{n,s}$ denote the GG distribution parameters, $\lambda_{n,s}$ represent the exponential distribution parameter,  $\omega_{n,s}$ with $0<\omega_{n,s}<1$ denote the mixture weight parameter, and $\xi_{n,s}$ corresponds to the pointing error due to the $\mathcal{R}-\mathcal{I}_{\mathcal{D}}$ and $\mathcal{R}-\mathcal{I}_{\mathcal{E}2}$, respectively. Again, $a_{n,r}, b_{n,r}$, $c_{n,r}$, $\lambda_{n,r}$, $\omega_{n,r}$, and $\xi_{n,r}$ have the similar definitions as mentioned earlier but for the $\mathcal{I}_{\mathcal{D}}-\mathcal{D}$ and $\mathcal{I}_{\mathcal{E}2}-\mathcal{E}2$, respectively. Now, the constants associated with the $\mathbb{E}(M_{n})$ can be given as follows:
$R_{1n}=\begin{bmatrix}
    1 & \xi_{n,r}^{2} & 1 & \xi_{n,s}^{2}
\end{bmatrix}$, $R_{2n}=\begin{bmatrix}
    a_{n,r} & \frac{\xi_{n,r}^{2}}{c_{n,r}} & 1-\mathcal{M}_{2,n} & 1-\mathcal{M}_{3,n}
\end{bmatrix}$, $R_{3n}=\begin{bmatrix}
    a_{n,s} & \frac{\xi_{n,s}^{2}}{c_{n,s}} & \mathcal{M}_{4,n} & \mathcal{M}_{5,n}
\end{bmatrix}$, $R_{4n}=\begin{bmatrix}
    a_{n,r} & \frac{\xi_{n,r}^{2}}{c_{r}} & a_{n,s} & \frac{\xi_{n,s}^{2}}{c_{s}}
\end{bmatrix}$, $Q_{1n}=\begin{bmatrix}
    \xi_{n,r}^{2}+1 & \xi_{n,s}^{2}+1
\end{bmatrix}$, $Q_{2n}=\begin{bmatrix}
    \frac{\xi_{n,r}^{2}}{c_{n,r}}+1 & 1-\mathcal{M}_{6,n}
\end{bmatrix}$, $Q_{3n}=\begin{bmatrix}
    \frac{\xi_{n,s}^{2}}{c_{n,s}}+1 & \mathcal{M}_{7,n}
\end{bmatrix}$, and $Q_{4n}=\begin{bmatrix}
    \frac{\xi_{n,r}^{2}}{c_{n,r}}+1 & \frac{\xi_{n,s}^{2}}{c_{n,s}}+1
\end{bmatrix}$, $\mathcal{M}_{2,n}=0,...,\frac{c_{n,r}-1}{c_{n,r}}$, 
$\mathcal{M}_{3,n}=\frac{1-\xi_{n,s}^{2}}{c_{n,r}},...,\frac{c_{n,r}-\xi_{n,s}^{2}}{c_{n,r}}$, 
$\mathcal{M}_{4,n}=\frac{1}{c_{n,s}},...,1$,
$\mathcal{M}_{5,n}=\frac{\xi_{n,r}^{2}}{c_{n,s}},...,\frac{\xi_{n,r}^{2}+c_{n,s}-1}{c_{n,s}}$,
$\mathcal{M}_{6,n}=-\frac{\xi_{n,s}^{2}}{c_{n,r}},...,\frac{c_{n,r}-\xi_{n,s}^{2}-1}{c_{n,r}}$,
$\mathcal{M}_{7,n}=\frac{1+\xi_{n,r}^{2}}{c_{n,s}},...,\frac{c_{n,s}+\xi_{n,r}^{2}}{c_{n,s}}$.
\setcounter{eqnback}{\value{equation}}
\setcounter{equation}{14}
\begin{figure*}[!t]
\begin{align}
    \label{feq}
    F_{\gamma_{eq}}(\gamma)=&1-\sum_{\eta_{t}=0}^{u_{t}-1}\frac{\gamma^{\frac{\eta_{t}}{2}}}{\eta_{t}!}\left(\sqrt{\frac{l_{t}}{\Omega_{t}v_{t}^{2}}}\right)^{\eta_{t}}G_{0,1}^{1,0}\left[\sqrt{\frac{l_{t}\gamma}{\Omega_{t}v_{t}^{2}}}\biggl | 
    \begin{array}{c}
     -\\
     0 \\ 
    \end{array}
    \right]+\sum_{\eta_{t}=0}^{u_{t}-1}\frac{\mathcal{M}_{1}}{\eta_{t}!\sqrt{r_{d}}(2\pi)^{\frac{r_{d}-1}{2}}}\left(\sqrt{\frac{l_{t}}{\Omega_{t}v_{t}^{2}}}\right)^{\eta_{t}}\gamma^{\frac{\eta_{t}}{2}+\frac{u_{d}}{r_{d}}}
\nonumber
\\    \times&G_{0,1}^{1,0}\left[\sqrt{\frac{l_{t}\gamma}{\Omega_{t}v_{t}^{2}}}\biggl | 
    \begin{array}{c}
     -\\
     0 \\ 
    \end{array}
    \right]G_{1,r_{d}+1}^{r_{d},1}\left[\frac{\mathbb{E}(M_{d})^{r_{d}}\gamma}{\Omega_{r_{d}}(v_{d}r_{d})^{r_{d}}}\biggl | 
    \begin{array}{c}
     1-\frac{u_{d}}{r_{d}} \\
     0,\frac{r_{d}-1}{r_{d}},\frac{-u_{d}}{r_{d}} \\ 
    \end{array}
    \right].
\end{align}
\hrulefill
\end{figure*}
\setcounter{eqnback}{\value{equation}}
\setcounter{equation}{18}
\begin{figure*}[!t]
\begin{align}
    \label{feqasy}
    F_{\gamma_{eq}}^{\infty}(\gamma)\approx&\Biggl(\sqrt{\frac{l_{t}\gamma}{\Omega_{t}v_{t}^{2}}}\Biggl)^{u_{t}}\frac{1}{u_{t}!}+\sum_{i=1}^{r_{d}}\frac{\mathcal{M}_{1}\gamma^{\frac{u_{d}}{r_{d}}}}{\sqrt{r_{d}}(2\pi)^{\frac{r_{d}-1}{2}}}\Biggl(\frac{\mathbb{E}(M_{d})^{r_{d}}\gamma}{\Omega_{r_{d}}(v_{d}r_{d})^{r_{d}}}\Biggl)^{P_{i}-1}\frac{\prod_{\kappa=1;\kappa\neq i}^{r_{d}}(P_{i}-P_{\kappa})\Gamma(1+\frac{u_{d}}{r_{d}}-P_{i})}{\Gamma(1+P_{r_{d}+1}-P_{i})}.
\end{align}
\hrulefill
\end{figure*}
\begin{remark}
    The newly proposed EGG model exhibits strong agreement with measured data across various atmospheric turbulence conditions, encompassing both Exponential-Gamma (EG) and Exponential-Lognormal distributions \cite{zedini2019unified}. This robust alignment establishes the EGG distribution as the most fitting probability distribution for characterizing fluctuations in underwater optical signal irradiance caused by air bubbles and temperature-induced turbulence. For instance, EGG model can be transformed into EG distribution by simply setting $c=1$ as shown in \cite{art41}.
\end{remark}

\subsection{CDF of SNR for Dual-hop RF-UOWC Link}
The CDF of $\gamma_{eq}$ is defined as
\setcounter{eqnback}{\value{equation}}
\setcounter{equation}{13}
\begin{align}
\label{c3}
 F_{\gamma_{eq}}=F_{\gamma_{t}}(\gamma)+F_{\gamma_{d}}(\gamma)-F_{\gamma_{t}}(\gamma)F_{\gamma_{d}}(\gamma).
\end{align}   
Now, substituting \eqref{t2} and \eqref{d2} into \eqref{c3} and after performing some algebraic calculations, $F_{\gamma_{eq}}$  can be expressed as shown in \eqref{feq}.

\noindent \textit{Asymptotic Analysis:}
For a more comprehensive understanding, it is essential to examine the asymptotic behavior of the proposed system. The asymptotic CDF expression due to the RF link can be obtained as \cite{n2}
\setcounter{eqnback}{\value{equation}}
\setcounter{equation}{15}
\begin{align}
\label{f1}
    F_{\gamma_{t}}^{\infty}(\gamma)=\Biggl(\sqrt{\frac{l_{t}\gamma}{\Omega_{t}v_{t}^{2}}}\Biggl)^{u_{t}}\frac{1}{u_{t}!}.
\end{align}
On the other hand, by applying the identity of \cite[Eq.~(8.2.2.14)]{art7} and utilizing the formula of \cite[Eq.~(41)]{n1} to expand the Meijer'G function, the asymptotic CDF of UOWC link can be derived finally as
\begin{align}
\label{f2}
    F_{\gamma_{d}}^{\infty}(\gamma)=&\sum_{i=1}^{r_{d}}\frac{\mathcal{M}_{1,n}\gamma^{\frac{u_{d}}{r_{d}}}}{\sqrt{r_{d}}(2\pi)^{\frac{r_{d}-1}{2}}}\Biggl(\frac{\mathbb{E}(M_{d})^{r_{d}}\gamma}{\Omega_{r_{d}}(v_{d}r_{d})^{r_{d}}}\Biggl)^{P_{i}-1}
\nonumber
\\
    &\frac{\prod_{\kappa=1;\kappa\neq i}^{r_{d}}(P_{i}-P_{\kappa})\Gamma(1+\frac{u_{d}}{r_{d}}-P_{i})}{\Gamma(1+P_{r_{d}+1}-P_{i})},
\end{align}
where $P=\begin{bmatrix}
    1 & \frac{1}{r_{d}} & 1+\frac{u_{d}}{r_{d}} 
\end{bmatrix}$. 
Hence, the asymptotic CDF for the proposed mixed model can be expressed finally as
\begin{align}
\label{fe}
    F_{\gamma_{eq}}^{\infty}(\gamma)=F_{\gamma_{t}}^{\infty}(\gamma)+F_{\gamma_{d}}^{\infty}(\gamma).
\end{align}
Now, substituting \eqref{f1} and \eqref{f2} into \eqref{fe}, the asymptotic expression is obtained as shown in \eqref{feqasy}.

\section{SECRECY PERFORMANCE ANALYSIS}
\label{spa}
In this section, we illustrate the closed-form analytical representations of diverse secrecy metrics, encompassing the lower bound of SOP, ASC, probability of SPSC, and EST analysis. Additionally, we perform asymptotic analyses to delve deeper into the understanding of the high SNR region facilitated by the proposed RIS-aided model.
\subsection{ASC Analysis}
ASC serves as a crucial metric for evaluating the effectiveness of secure communication, representing the mean value of instantaneous secrecy capacity. Assuming the eavesdropper can intercept information through the RF link, ASC can be mathematically defined as 
\cite[Eq. (15)]{art6}
\setcounter{eqnback}{\value{equation}}
\setcounter{equation}{19}
\begin{align}
\label{asc}
    ASC^{I}=\int_{0}^{\infty}\frac{ F_{\gamma_{e_{s}}}(\gamma)}{1+\gamma}[1-F_{\gamma_{eq}}(\gamma)]d\gamma.
\end{align}

\begin{theorem}
The ASC can be expressed in closed form as 
\begin{align} \label{ascf}
ASC^{I}=&\sum_{\eta_{t}=0}^{u_{t}-1}\frac{1}{\eta_{t}!}\left(\sqrt{\frac{l_{t}}{\Omega_{t}v_{t}^{2}}}\right)^{\eta_{t}}
\Biggr[\mathcal{X}_{1}-\frac{\mathcal{M}_{1}}{\sqrt{r_{d}}(2\pi)^{\frac{r_{d}-1}{2}}}\mathcal{X}_{2}
\nonumber
\\
-&\sum_{\eta_{e_{s}}=0}^{u_{e_{s}}-1}\frac{1}{\eta_{e_{s}}!}\left(\sqrt{\frac{l_{e_{s}}}{\Omega_{e_{s}}v_{e_{s}}^{2}}}\right)^{\eta_{e_{s}}}\mathcal{X}_{3}
\nonumber
\\
+&\sum_{\eta_{e_{s}}=0}^{u_{e_{s}}-1}\frac{\mathcal{M}_{1}r_{d}^{-\frac{1}{2}}}{\eta_{e_{s}}!(2\pi)^{\frac{r_{d}-1}{2}}}\left(\sqrt{\frac{l_{e_{s}}}{\Omega_{e_{s}}v_{e_{s}}^{2}}}\right)^{\eta_{e_{s}}}\mathcal{X}_{4}\Biggr].
\end{align}
where $\mathcal{X}_{1}$, $\mathcal{X}_{2}$, $\mathcal{X}_{3}$, and $\mathcal{X}_{4}$ are the integral terms.
\end{theorem}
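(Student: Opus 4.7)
The plan is to evaluate the defining integral in (\ref{asc}) by direct substitution of the pieces already derived in the excerpt, then to use standard product-of-Meijer-$G$ integration techniques to obtain a closed-form sum. First I would substitute the CDF $F_{\gamma_{e_{s}}}(\gamma)$ from (\ref{t2}) and $F_{\gamma_{eq}}(\gamma)$ from (\ref{feq}) into (\ref{asc}). The integrand then splits as
$\frac{F_{\gamma_{e_{s}}}(\gamma)}{1+\gamma}\bigl[1-F_{\gamma_{eq}}(\gamma)\bigr]
= \frac{1}{1+\gamma}\bigl(1-\textstyle\sum^{\text{eav}}\bigr)\bigl(\textstyle\sum^{\text{RF}}-\sum^{\text{RF}\cdot\text{UOWC}}\bigr)$,
where the $\sum^{\text{RF}}$ block collects the single-Meijer-$G$ term from $1-F_{\gamma_{eq}}$, the $\sum^{\text{RF}\cdot\text{UOWC}}$ block collects the product of an RF $G$-function and a UOWC $G$-function from (\ref{feq}), and $\sum^{\text{eav}}$ is the corresponding sum from the eavesdropper CDF (\ref{t2}). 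Multiplying out yields exactly four integrals, which I would label $\mathcal{X}_1,\mathcal{X}_2,\mathcal{X}_3,\mathcal{X}_4$ following the sign and prefactor pattern that already appears in (\ref{ascf}); the indexing and the placement of the summation factors $\frac{1}{\eta_t!}\bigl(\sqrt{l_t/(\Omega_t v_t^2)}\bigr)^{\eta_t}$ and $\frac{1}{\eta_{e_s}!}\bigl(\sqrt{l_{e_s}/(\Omega_{e_s} v_{e_s}^2)}\bigr)^{\eta_{e_s}}$ come naturally out of this expansion.

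The next step is to reduce each $\mathcal{X}_i$ to a recognizable tabulated integral. I would rewrite the $(1+\gamma)^{-1}$ kernel as a Meijer-$G$ function via $(1+\gamma)^{-1}=G_{1,1}^{1,1}\!\left[\gamma\,\bigl|\,\begin{smallmatrix}0\\0\end{smallmatrix}\right]$ (using \cite{art5}-type identities) so that each integrand becomes a product of $\gamma^{\text{power}}$ with one, two, or three Meijer-$G$ functions whose arguments are powers of $\gamma$. For $\mathcal{X}_1$ and $\mathcal{X}_3$ the resulting integral is a product of two $G$-functions, which is evaluated by a single Mellin–Barnes-type formula. For $\mathcal{X}_2$ and $\mathcal{X}_4$ the integral contains three $G$-functions, and I would apply the product-of-three-$G$ integral reduction that converts such an expression to an extended (multivariate) Fox-$H$ or, after suitable rescaling via $\gamma\mapsto\gamma^{1/r_d}$, to a single higher-order Meijer-$G$.

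The main obstacle will be $\mathcal{X}_4$. Here three Meijer-$G$ factors coexist, carrying arguments with different fractional exponents of $\gamma$ (namely $\gamma^{1/2}$ from the RF blocks and $\gamma^{1/r_d}$ from the UOWC block). Aligning these scales requires a substitution $\gamma\mapsto\gamma^{2}$ or $\gamma\mapsto\gamma^{r_d}$ together with the Meijer-$G$ argument-scaling identity, after which the Mellin–Barnes contour can be deformed so that the resulting integral falls into the Fox-$H$ or bivariate-$H$ framework. Convergence has to be checked by verifying that the combined pole/zero pattern of the Gamma factors keeps the contour in the fundamental strip; this is routine but must be stated.

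Once every $\mathcal{X}_i$ is expressed in closed form using Meijer-$G$ or Fox-$H$ functions, I would reassemble the four pieces with the sign pattern inherited from the expansion of $F_{\gamma_{e_s}}(1-F_{\gamma_{eq}})$, extract the common prefactors associated with the finite sums over $\eta_t$ and $\eta_{e_s}$, and obtain the displayed form (\ref{ascf}). A final sanity check would be to verify the limiting cases where the eavesdropper link vanishes (i.e.\ $\Omega_{e_s}\to 0$), recovering the ergodic capacity of the legitimate link, and where the UOWC block degenerates (reducing $\mathcal{X}_2$ and $\mathcal{X}_4$), matching the single-hop $\alpha$-$\mu$ secrecy results already reported in the cited literature.
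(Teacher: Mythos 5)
Your proposal follows essentially the same route as the paper: substituting \eqref{t2} and \eqref{feq} into \eqref{asc}, expanding the product $F_{\gamma_{e_s}}(1-F_{\gamma_{eq}})$ into the four integrals $\mathcal{X}_1$--$\mathcal{X}_4$ with the stated prefactors, rewriting $(1+\gamma)^{-1}$ as $G_{1,1}^{1,1}$ via \cite[Eq.~(8.4.2.5)]{art7}, and evaluating the two-function term by \cite[Eq.~(2.24.1.1)]{art7} and the three-function terms by converting to Fox's $H$ functions and applying the bivariate-$H$ identity \cite[Eq.~(2.3)]{art9}, which is exactly how the paper handles the mismatched $\gamma^{1/2}$ and $\gamma^{1/r_d}$ arguments. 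The only cosmetic difference is that the paper treats $\mathcal{X}_3$ with the same three-$G$ (bivariate Fox-$H$) machinery rather than merging the two exponential factors into a single two-$G$ integral, but both reductions are valid.
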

\begin{proof}
The four integral terms (i.e., $\mathcal{X}_{1}$, $\mathcal{X}_{2}$, $\mathcal{X}_{3}$, and $\mathcal{X}_{4}$) can be obtained as follows:
\subsubsection{Derivation of $\mathcal{X}_{1}$}
$\mathcal{X}_{1}$ is denoted as
\begin{align}
\mathcal{X}_{1}=\int_{0}^{\infty}\gamma^{\frac{\eta_{t}}{2}}\frac{1}{1+\gamma}G_{0,1}^{1,0}\left[\sqrt{\frac{l_{t}}{\Omega_{t}v_{t}^{2}}}\gamma^{\frac{1}{2}}\biggl | 
    \begin{array}{c}
     -\\
     0 \\ 
    \end{array}
    \right]d\gamma.
    \nonumber
\end{align}
Using the identity \cite[Eq. (8.4.2.5)]{art7} to transform $\frac{1}{1+\gamma}$ into Meijer's G function and calculating the integral utilizing \cite[Eq. (2.24.1.1)]{art7}, $\mathcal{X}_{1}$ is determined as
\begin{align}
   \label{x1}\mathcal{X}_{1}=&\int_{0}^{\infty}\gamma^{\frac{\eta_{t}}{2}}G_{1,1}^{1,1}\left[\gamma\biggl | 
    \begin{array}{c}
     0\\
     0 \\ 
    \end{array}
    \right]G_{0,1}^{1,0}\left[\sqrt{\frac{l_{t}}{\Omega_{t}v_{t}^{2}}}\gamma^{\frac{1}{2}}\biggl | 
    \begin{array}{c}
     -\\
     0 \\ 
    \end{array}
    \right]d\gamma
\nonumber
\\
=&\frac{1}{\sqrt{\pi}}G_{1,3}^{3,1}\left[\frac{l_{t}}{4\Omega_{t}v_{t}^{2}}\biggl | 
    \begin{array}{c}
     \frac{-\eta_{t}}{2}\\
     0,\frac{1}{2},\frac{-\eta_{t}}{2} \\ 
    \end{array}
\right].
\end{align} 
\subsubsection{Derivation of $\mathcal{X}_{2}$}
$\mathcal{X}_{2}$ is denoted as
\begin{align}
\mathcal{X}_{2}=&\int_{0}^{\infty}\gamma^{\frac{\eta_{t}}{2}+\frac{u_{d}}{r_{d}}}\frac{1}{1+\gamma}G_{0,1}^{1,0}\left[\sqrt{\frac{l_{t}\gamma}{\Omega_{t}v_{t}^{2}}}\biggl | 
    \begin{array}{c}
     -\\
     0 \\ 
    \end{array}
    \right]
    \nonumber
    \\
\times&G_{1,r_{d}+1}^{r_{d},1}\left[\frac{\mathbb{E}(M_{d})^{r_{d}}}{\Omega_{r_{d}}(v_{d}r_{d})^{r_{d}}}\gamma\biggl | 
    \begin{array}{c}
     1-\frac{u_{d}}{r_{d}} \\
     0,\frac{r_{d}-1}{r_{d}},\frac{-u_{d}}{r_{d}} \\ 
    \end{array}
    \right]d\gamma.
    \nonumber
\end{align}
Utilizing the same identity as in $\mathcal{X}_{1}$ to convert $\frac{1}{1+\gamma}$ term into Meijer's G function, all the Meijer's G terms are transformed into  Fox's H function according to\cite[Eq. (6.2.8)]{art8}. Now, applying the identity of \cite[Eq. (2.3)]{art9}, $\mathcal{X}_{2}$ is derived as
\begin{align}
\label{x2}
\mathcal{X}_{2}=&\int_{0}^{\infty}\gamma^{\frac{\eta_{t}}{2}+\frac{u_{d}}{r_{d}}}G_{1,1}^{1,1}\left[\gamma\biggl | 
    \begin{array}{c}
     0\\
     0 \\ 
    \end{array}
    \right]G_{0,1}^{1,0}\left[\sqrt{\frac{l_{t}\gamma}{\Omega_{t}v_{t}^{2}}}\biggl | 
    \begin{array}{c}
     -\\
     0 \\ 
    \end{array}
    \right]
    \nonumber
    \\
\times &G_{1,r_{d}+1}^{r_{d},1}\left[\frac{\mathbb{E}(M_{d})^{r_{d}}}{\Omega_{r_{d}}(v_{d}r_{d})^{r_{d}}}\gamma\biggl | 
    \begin{array}{c}
     1-\frac{u_{d}}{r_{d}} \\
     0,\frac{r_{d}-1}{r_{d}},\frac{-u_{d}}{r_{d}} \\ 
    \end{array}
    \right]d\gamma
\nonumber
\\
=&\int_{0}^{\infty}\gamma^{\frac{\eta_{t}}{2}+\frac{u_{d}}{r_{d}}}H_{1,1}^{1,1}\left[\gamma\biggl | 
    \begin{array}{c}
     (0,1)\\
     (0,1) \\ 
    \end{array}
    \right]
    \nonumber
    \\    \times &H_{0,1}^{1,0}\left[\sqrt{\frac{l_{t}\gamma}{\Omega_{t}v_{t}^{2}}}\biggl | 
    \begin{array}{c}
     -\\
     (0,1) \\ 
    \end{array}
    \right]
    \nonumber
    \\
\times&H_{1,r_{d}+1}^{r_{d},1}\left[\frac{\mathbb{E}(M_{d})^{r_{d}}\gamma}{\Omega_{r_{d}}(v_{d}r_{d})^{r_{d}}}\biggl | 
    \begin{array}{c}
     (1-\frac{u_{d}}{r_{d}},1) \\
     (0,1),(\frac{r_{d}-1}{r_{d}},1),(\frac{-u_{d}}{r_{d}},1) \\ 
    \end{array}
    \right]d\gamma
\nonumber
\\
=&2\Xi_{1}H_{1,0;1,1;1,r_{d}+1}^{1,0;1,1;r_{d},1}\left[
    \begin{array}{c}
     \Xi_{2}\\
     1;- \\ 
    \end{array}\biggl |    \begin{array}{c}
     (0,1)\\
     (0,1)
    \end{array}\biggl | \begin{array}{c}
     \Xi_{3}\\
     \Xi_{4}
    \end{array}\biggl | \begin{array}{c}
     \Xi_{5}, \Xi_{6} \\ 
    \end{array}  
    \right],
\end{align}
where $\Xi_{1}=\left(\sqrt{\frac{l_{t}}{\Omega_{t}v_{t}^{2}}}\right)^{-2-\eta_{t}-\frac{2u_{d}}{r_{d}}}$,
$\Xi_{2}=(-1-\eta_{t}-\frac{2u_{d}}{r_{d}};2,2)$,
$\Xi_{3}=(1-\frac{u_{d}}{r_{d}},1)$,
$\Xi_{4}=\left[(0,1),(\frac{r_{d}-1}{r_{d}},1),(\frac{-u_{d}}{r_{d}},1)\right]$,
$\Xi_{5}=\frac{\Omega_{t}v_{t}^{2}}{l_{t}}$, and
$\Xi_{6}=\frac{\Omega_{t}v_{t}^{2}\mathbb{E}(M_{d})^{r_{d}}}{l_{t}\Omega_{r_{d}}(v_{d}r_{d})^{r_{d}}}$.

\subsubsection{Derivation of $\mathcal{X}_{3}$}
$\mathcal{X}_{3}$ is denoted as
\begin{align}
\mathcal{X}_{3}=&\int_{0}^{\infty}\gamma^{\frac{\eta_{t}}{2}+\frac{\eta_{e_{s}}}{2}}\frac{1}{1+\gamma}G_{0,1}^{1,0}\left[\sqrt{\frac{l_{t}\gamma}{\Omega_{t}v_{t}^{2}}}\biggl | 
    \begin{array}{c}
     -\\
     0 \\ 
    \end{array}
    \right]
    \nonumber
    \\
\times&G_{0,1}^{1,0}\left[\sqrt{\frac{l_{e_{s}}\gamma}{\Omega_{e_{s}}v_{e_{s}}^{2}}}\biggl | 
    \begin{array}{c}
     -\\
     0 \\ 
    \end{array}
    \right]d\gamma.
    \nonumber
\end{align}
Utilizing the similar identities employed in $\mathcal{X}_{2}$, $\mathcal{X}_{3}$ is expressed as
\begin{align}
\label{x3}
\mathcal{X}_{3}=&\int_{0}^{\infty}\gamma^{\frac{\eta_{t}}{2}+\frac{\eta_{e_{s}}}{2}}G_{1,1}^{1,1}\left[\gamma\biggl | 
    \begin{array}{c}
     0\\
     0 \\ 
    \end{array}
    \right]G_{0,1}^{1,0}\left[\sqrt{\frac{l_{t}\gamma}{\Omega_{t}v_{t}^{2}}}\biggl | 
    \begin{array}{c}
     -\\
     0 \\ 
    \end{array}
    \right]
    \nonumber
    \\
\times&G_{0,1}^{1,0}\left[\sqrt{\frac{l_{e_{s}}\gamma}{\Omega_{e_{s}}v_{e_{s}}^{2}}}\biggl | 
    \begin{array}{c}
     -\\
     0 \\ 
    \end{array}
    \right]d\gamma
    \nonumber
    \\
=&\int_{0}^{\infty}\gamma^{\frac{\eta_{t}}{2}+\frac{\eta_{e_{s}}}{2}}H_{1,1}^{1,1}\left[\gamma\biggl | 
    \begin{array}{c}
     (0,1)\\
     (0,1) \\ 
    \end{array}
    \right]H_{0,1}^{1,0}\left[\sqrt{\frac{l_{t}\gamma}{\Omega_{t}v_{t}^{2}}}\biggl | 
    \begin{array}{c}
     -\\
     (0,1) \\ 
    \end{array}
    \right]
    \nonumber
    \\
\times&H_{0,1}^{1,0}\left[\sqrt{\frac{l_{e_{s}}\gamma}{\Omega_{e_{s}}v_{e_{s}}^{2}}}\biggl | 
    \begin{array}{c}
     -\\
     (0,1) \\ 
    \end{array}
    \right]d\gamma
    \nonumber
    \\
\nonumber
\\
=&2\Xi_{7}H_{1,0;1,1;0,1}^{1,0;1,1;1,0}\left[
    \begin{array}{c}
     \Xi_{8}\\
     1;- \\ 
    \end{array}\biggl |    \begin{array}{c}
     (0,1)\\
     (0,1)
    \end{array}\biggl | \begin{array}{c}
     -\\
     (0,1)
    \end{array}\biggl | \begin{array}{c}
     \Xi_{5}, \Xi_{9} \\ 
    \end{array}  
    \right],
\end{align}
where $\Xi_{7}=\left(\sqrt{\frac{l_{t}}{\Omega_{t}v_{t}^{2}}}\right)^{-2-\eta_{t}-\eta_{e_{s}}}$,
$\Xi_{8}=(-1-\eta_{t}-\eta_{e_{s}};2,1)$, and
$\Xi_{9}=\sqrt{\frac{l_{e_{s}}\Omega_{t}v_{t}^{2}}{l_{t}\Omega_{e_{s}}v_{e_{s}}^{2}}}$.
\subsubsection{Derivation of $\mathcal{X}_{4}$}
$\mathcal{X}_{4}$ is denoted as
\begin{align} \nonumber
\mathcal{X}_{4}&=\int_{0}^{\infty}\gamma^{\frac{\eta_{t}}{2}+\frac{\eta_{e_{s}}}{2}+\frac{u_{d}}{r_{d}}} 
\\ \nonumber
& \times G_{1,r_{d}+1}^{r_{d},1}\left[\frac{\mathbb{E}(M_{d})^{r_{d}}}{\Omega_{r_{d}}(v_{d}r_{d})^{r_{d}}}\gamma\biggl | 
    \begin{array}{c}
     1-\frac{u_{d}}{r_{d}} \\
     0,\frac{r_{d}-1}{r_{d}},\frac{-u_{d}}{r_{d}} \\ 
    \end{array}
    \right]
    \nonumber
    \\
& \times G_{0,1}^{1,0}\left[\left(\sqrt{\frac{l_{t}}{\Omega_{t}v_{t}^{2}}}+\sqrt{\frac{l_{e_{s}}}{\Omega_{e_{s}}v_{e_{s}}^{2}}}\right)\gamma^{\frac{1}{2}}\biggl | 
    \begin{array}{c}
     -\\
     0 \\ 
    \end{array}
    \right]\frac{1}{1+\gamma}d\gamma.
    \nonumber
\end{align}
Utilizing the similar identities as used in $\mathcal{X}_{2}$ and $\mathcal{X}_{3}$, $\mathcal{X}_{4}$ is finally derived as
\begin{align} \nonumber
\label{x4}
\mathcal{X}_{4}=&\int_{0}^{\infty}\gamma^{\frac{\eta_{t}}{2}+\frac{\eta_{e_{s}}}{2}+\frac{u_{d}}{r_{d}}}
 \nonumber
 \\
 \times &G_{1,r_{d}+1}^{r_{d},1}\left[\frac{\mathbb{E}(M_{d})^{r_{d}}}{\Omega_{r_{d}}(v_{d}r_{d})^{r_{d}}}\gamma\biggl | 
    \begin{array}{c}
     1-\frac{u_{d}}{r_{d}} \\
     0,\frac{r_{d}-1}{r_{d}},\frac{-u_{d}}{r_{d}} \\ 
    \end{array}
    \right]
 \nonumber
 \\    
\times &G_{0,1}^{1,0}\left[\left(\sqrt{\frac{l_{t}}{\Omega_{t}v_{t}^{2}}}+\sqrt{\frac{l_{e_{s}}}{\Omega_{e_{s}}v_{e_{s}}^{2}}}\right)\gamma^{\frac{1}{2}}\biggl | 
    \begin{array}{c}
     -\\
     0 \\ 
    \end{array}
    \right]
G_{1,1}^{1,1}\left[\gamma\biggl | 
    \begin{array}{c}
     0\\
     0 \\ 
    \end{array}
    \right]d\gamma
\nonumber
\\
=&\int_{0}^{\infty}\gamma^{\frac{\eta_{t}}{2}+\frac{\eta_{e_{s}}}{2}+\frac{u_{d}}{r_{d}}}H_{1,1}^{1,1}\left[\gamma\biggl | 
    \begin{array}{c}
     (0,1)\\
     (0,1) \\ 
    \end{array}
    \right]
\nonumber
\\
\times&H_{0,1}^{1,0}\left[\left(\sqrt{\frac{l_{t}}{\Omega_{t}v_{t}^{2}}}+\sqrt{\frac{l_{e_{s}}}{\Omega_{e_{s}}v_{e_{s}}^{2}}}\right)\gamma^{\frac{1}{2}}\biggl | 
    \begin{array}{c}
     -\\
     (0,1) \\ 
    \end{array}
    \right]
        \nonumber
    \\
\times&H_{1,r_{d}+1}^{r_{d},1}\left[\frac{\mathbb{E}(M_{d})^{r_{d}}}{\Omega_{r_{d}}(v_{d}r_{d})^{r_{d}}}\gamma\biggl | 
    \begin{array}{c}
     (1-\frac{u_{d}}{r_{d}},1) \\
     (0,1),(\frac{r_{d}-1}{r_{d}},1),(\frac{-u_{d}}{r_{d}},1) \\ 
    \end{array}
    \right]d\gamma
\nonumber
\\
=&2\Xi_{10}H_{1,0;1,1;1,r_{d}+1}^{1,0;1,1;r_{d},1}\left[
    \begin{array}{c}
     \Xi_{11}\\
     1;- \\ 
    \end{array}\biggl |    \begin{array}{c}
     (0,1)\\
     (0,1)
    \end{array}\biggl | \begin{array}{c}
     \Xi_{3}\\
     \Xi_{4}
    \end{array}\biggl | \begin{array}{c}
     \Xi_{12}, \Xi_{13} \\ 
    \end{array}  
    \right],
\end{align}
where $\Xi_{10}=\frac{1}{\left(\sqrt{\frac{l_{t}}{\Omega_{t}v_{t}^{2}}}+\sqrt{\frac{l_{e_{s}}}{\Omega_{e_{s}}v_{e_{s}}^{2}}}\right)^{2+\eta_{t}+\eta_{e_{s}}+\frac{2u_{d}}{r_{d}}}}$,
$\Xi_{11}=\left(-1-\eta_{t}-\eta_{e_{s}}-\frac{2u_{d}}{r_{d}};2,2\right)$,
$\Xi_{12}=\frac{1}{\left(\sqrt{\frac{l_{t}}{\Omega_{t}v_{t}^{2}}}+\sqrt{\frac{l_{e_{s}}}{\Omega_{e_{s}}v_{e_{s}}^{2}}}\right)^{2}}$, and 
$\Xi_{13}=\frac{\mathbb{E}(M_{d})^{r_{d}}}{\Omega_{r_{d}}(v_{d}r_{d})^{r_{d}}\left(\sqrt{\frac{l_{t}}{\Omega_{t}v_{t}^{2}}}+\sqrt{\frac{l_{e_{s}}}{\Omega_{e_{s}}v_{e_{s}}^{2}}}\right)^{2}}$.
\\
\\
Now, substituting Eqs. \eqref{x1}, \eqref{x2}, \eqref{x3}, and \eqref{x4} in Eq. (\ref{ascf}) and performing some simple mathematical manipulations, the analytical expression of ASC is derived. Hence, the proof is completed.
\end{proof}

\subsection{Lower Bound of SOP Analysis}
SOP plays a crucial role in evaluating the system's ability to maintain confidentiality against potential eavesdropping \cite{art5}. To ensure secure communication, it is imperative for the instantaneous secrecy capacity ($C_{0}$) to exceed the predetermined threshold secrecy rate ($R_{0}$).

Scenario-$I$:
In this case, it is assumed that the eavesdropper can take information via RF link only. Hence, the lower bound of SOP is defined as \cite[Eq. (21)]{art10}
\begin{align}
\label{S1}
    SOP^{I}=&Pr\{\gamma_{eq}\leq\varphi \gamma_{e_{s}}\}
\nonumber
\\
=&\int_{0}^{\infty}F_{\gamma_{eq}}(\varphi\gamma)f_{\gamma_{e_{s}}}(\gamma)d\gamma,
\end{align}
where $\varphi=2^{R_{0}}$. 

\begin{remark}
   Due to the presence of Meijer's $G$ function in both \eqref{t2} and \eqref{d2}, the derivation of the exact SOP becomes mathematically intractable. Consequently, in this paper, we derive the analytical expression of lower bound of SOP in closed-form.
\end{remark}
\begin{theorem}
The closed form expression of lower bound of $SOP^{I}$ can be written as
\begin{align}
\label{sop1}
     SOP^{I}=&\frac{(\frac{l_{e_{s}}}{\Omega_{e_{s}}})^{\frac{u_{e_{s}}}{2}} }{2\Gamma(u_{e_{s}})v_{e_{s}}^{u_{e_{s}}}}\Biggl[\Re_{1}-\sum_{\eta_{t}=0}^{u_{t}-1}\frac{\varphi^{\frac{\eta_{t}}{2}}}{\eta_{t}!}\left(\sqrt{\frac{l_{t}}{\Omega_{t}v_{t}^{2}}}\right)^{\eta_{t}}\Re_{2}
\nonumber     
\\     +&\sum_{\eta_{t}=0}^{u_{t}-1}\frac{\mathcal{M}_{1}\varphi^{\frac{\eta_{t}}{2}+\frac{u_{d}}{r_{d}}}}{\eta_{t}!\sqrt{r_{d}}(2\pi)^{\frac{r_{d}-1}{2}}}\left(\sqrt{\frac{l_{t}}{\Omega_{t}v_{t}^{2}}}\right)^{\eta_{t}}\Re_{3}\Biggl].
\end{align}
where $\Re_{1}$, $\Re_{2}$, and $\Re_{3}$ are the three integral terms.
\end{theorem}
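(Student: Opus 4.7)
The plan is to substitute the closed-form CDF of $\gamma_{eq}$ from \eqref{feq} and the PDF of $\gamma_{e_{s}}$ from \eqref{t1} into the defining integral \eqref{S1}, then split the resulting expression into three integrals $\Re_1,\Re_2,\Re_3$ corresponding to the three additive pieces of $F_{\gamma_{eq}}(\varphi\gamma)$ (namely the constant ``$1$", the RF-only summation, and the joint RF--UOWC summation). After pulling every $\gamma$-independent factor (including the factor $\varphi^{\eta_t/2}$ introduced by the scaling $\gamma\mapsto\varphi\gamma$ in the argument) outside the integrals, the prefactors appearing in \eqref{sop1} will emerge automatically and the remaining work reduces to evaluating $\Re_1,\Re_2,\Re_3$.

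First I would handle $\Re_1$, which is simply $\int_0^\infty f_{\gamma_{e_s}}(\gamma)\,d\gamma$-type with the $\gamma^{u_{e_s}/2-1}\,G^{1,0}_{0,1}[\,\cdot\,]$ density: a single change of variable followed by \cite[Eq.~(2.24.2.1)]{art7} (or equivalently the direct Mellin--Barnes evaluation of a single Meijer--$G$ against a power) gives a closed-form constant. Next I would treat $\Re_2$, which is the integral of a product of two Meijer--$G$ functions (one from the RF factor of $F_{\gamma_{eq}}$ and one from the PDF of $\gamma_{e_s}$) multiplied by a power of $\gamma$; this is directly dispatched by the standard product identity \cite[Eq.~(2.24.1.1)]{art7}, producing a single Meijer--$G$ function analogous to $\mathcal{X}_1$ in the ASC derivation.

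The main obstacle is $\Re_3$: after substitution it becomes an integral of the \emph{product of three} special functions in $\gamma$, namely the $G^{1,0}_{0,1}$ from $f_{\gamma_{e_s}}$, the $G^{1,0}_{0,1}$ from the RF part of $F_{\gamma_{eq}}$, and the $G^{r_d,1}_{1,r_d+1}$ from the UOWC part, against a power $\gamma^{\eta_t/2+\eta_{e_s}/2+u_d/r_d-1}$. No univariate Meijer--$G$ product identity applies here. My plan, mirroring the treatment of $\mathcal{X}_2$ and $\mathcal{X}_4$ in the ASC proof, is to recast each $G$-function as a Fox $H$-function via \cite[Eq.~(6.2.8)]{art8} and then invoke the bivariate Mittal--Gupta identity \cite[Eq.~(2.3)]{art9} to express $\Re_3$ as an extended (bivariate) Fox $H$-function $H^{\,\cdot\,;\,\cdot\,;\,\cdot\,}_{\,\cdot\,;\,\cdot\,;\,\cdot\,}[\,\cdot\,,\,\cdot\,]$ in two scaled arguments formed from $l_t/(\Omega_t v_t^2)$, $l_{e_s}/(\Omega_{e_s}v_{e_s}^2)$, and $\mathbb{E}(M_d)^{r_d}/(\Omega_{r_d}(v_d r_d)^{r_d})$, in direct analogy with the constants $\Xi_{10}$--$\Xi_{13}$ of $\mathcal{X}_4$.

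Finally, I would assemble the three evaluated integrals into \eqref{sop1}, being careful to track the overall prefactor $\tfrac{(l_{e_s}/\Omega_{e_s})^{u_{e_s}/2}}{2\Gamma(u_{e_s})v_{e_s}^{u_{e_s}}}$ from the $\gamma_{e_s}$ PDF, the summation indices $\eta_t$ inherited from the CDF, the power $\varphi^{\eta_t/2+u_d/r_d}$ produced by the scaling, and the Jacobian factor $2$ that arises whenever an integrand involving $\gamma^{1/2}$ is converted back after substitution. Validity of the derivation requires the convergence conditions of \cite[Eq.~(2.24.1.1)]{art7} and of the Mittal--Gupta identity, which hold here because the parameters $u_t,u_{e_s},u_d,r_d$ are positive and the arguments of the Fox--$H$ functions are strictly positive; this concludes the proof sketch.
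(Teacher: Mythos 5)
Your proposal follows essentially the same route as the paper: the same three-way split of $F_{\gamma_{eq}}(\varphi\gamma)$ into $\Re_1,\Re_2,\Re_3$, a Mellin-type evaluation of $\Re_1$ (the paper cites \cite[Eq.~(7.811.4)]{art1}, which is equivalent to the identity you invoke), the Meijer-$G$ product identity \cite[Eq.~(2.24.1.1)]{art7} for $\Re_2$, and the Fox-$H$ conversion via \cite[Eq.~(6.2.8)]{art8} followed by \cite[Eq.~(2.3)]{art9} to obtain a bivariate $H$-function for $\Re_3$. The only slip is notational: the power of $\gamma$ in $\Re_3$ carries $u_{e_s}/2$, inherited from the PDF $f_{\gamma_{e_s}}(\gamma)\propto\gamma^{u_{e_s}/2-1}$, not $\eta_{e_s}/2$ as you wrote --- there is no $\eta_{e_s}$ summation in the SOP derivation, unlike in the ASC proof where the eavesdropper's CDF is used.
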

\begin{proof}
The three integral terms (i.e., $\Re_{1}$, $\Re_{2}$, and $\Re_{3}$) are obtained as follows:
\subsubsection{Derivation of $\Re_{1}$}
$\Re_{1}$ is expressed as
\begin{align}
\Re_{1}=&\int_{0}^{\infty}\gamma^{\frac{u_{e_{s}}}{2}-1}G_{0,1}^{1,0}\left[\sqrt{\frac{l_{e_{s}}}{\Omega_{e_{s}}v_{e_{s}}^{2}}}\gamma^{\frac{1}{2}}\biggl | 
    \begin{array}{c}
     -\\
     0 \\ 
    \end{array}
    \right]d\gamma.
    \nonumber
\end{align}
Now, solving the integral with the help of \cite[Eq. (7.811.4)]{art1}, $\Re_{1}$ is derived finally as
\begin{align}
\label{r1}
\Re_{1}=\frac{2\Gamma(u_{e_{s}})}{\left(\sqrt{\frac{l_{e_{s}}}{\Omega_{e_{s}}v_{e_{s}}^{2}}}\right)^{u_{e_{s}}}}.
\end{align}
\subsubsection{Derivation of $\Re_{2}$}
$\Re_{2}$ is expressed as
\begin{align}
\Re_{2}=&\int_{0}^{\infty}\gamma^{\frac{\eta_{t}}{2}+\frac{u_{e_{s}}}{2}-1}G_{0,1}^{1,0}\left[\sqrt{\frac{l_{t}\varphi}{\Omega_{t}v_{t}^{2}}}\gamma^{\frac{1}{2}}\biggl | 
    \begin{array}{c}
     -\\
     0 \\ 
    \end{array}
    \right]
    \nonumber
    \\
\times&G_{0,1}^{1,0}\left[\sqrt{\frac{l_{e_{s}}}{\Omega_{e_{s}}v_{e_{s}}^{2}}}\gamma^{\frac{1}{2}}\biggl | 
    \begin{array}{c}
     -\\
     0 \\ 
    \end{array}
    \right]d\gamma.
    \nonumber
\end{align}
Using the identity of \cite[Eq. (2.24.1.1)]{art7}, $\Re_{2}$ can be written as
\begin{align}
\label{r2}
\Re_{2}=&2\left(\sqrt{\frac{l_{t}\varphi}{\Omega_{t}v_{t}^{2}}}\right)^{-\eta_{t}-u_{e_{s}}}
\nonumber
\\
\times&G_{1,1}^{1,1}\left[\sqrt{\frac{l_{e_{s}}\Omega_{t}v_{t}^{2}}{l_{t}\varphi\Omega_{e_{s}}v_{e_{s}}^{2}}}\biggl | 
    \begin{array}{c}
     1-\eta_{t}-u_{e_{s}}\\
     0 \\ 
    \end{array}
    \right].
\end{align}
\subsubsection{Derivation of $\Re_{3}$}
$\Re_{3}$ is expressed as
\begin{align}
\Re_{3}=&\int_{0}^{\infty}\gamma^{\frac{\eta_{t}}{2}+\frac{u_{e_{s}}}{2}+\frac{u_{d}}{r_{d}}-1}G_{0,1}^{1,0}\left[\sqrt{\frac{l_{t}\varphi}{\Omega_{t}v_{t}^{2}}}\gamma^{\frac{1}{2}}\biggl | 
    \begin{array}{c}
     -\\
     0 \\ 
    \end{array}
    \right]
    \nonumber
    \\
\times&G_{1,r_{d}+1}^{r_{d},1}\left[\frac{\varphi\mathbb{E}(M_{d})^{r_{d}}}{\Omega_{r_{d}}(v_{d}r_{d})^{r_{d}}}\gamma\biggl | 
    \begin{array}{c}
     1-\frac{u_{d}}{r_{d}} \\
     0,\frac{r_{d}-1}{r_{d}},\frac{-u_{d}}{r_{d}} \\ 
    \end{array}
    \right]
    \nonumber
    \\
\times&G_{0,1}^{1,0}\left[\sqrt{\frac{l_{e_{s}}}{\Omega_{e_{s}}v_{e_{s}}^{2}}}\gamma^{\frac{1}{2}}\biggl | 
    \begin{array}{c}
     -\\
     0 \\ 
    \end{array}
    \right]d\gamma.
    \nonumber
\end{align}
All the Meijer's G terms are transformed into  Fox's H function according to\cite[Eq. (6.2.8)]{art8}. Now, utilizing the identity  \cite[Eq. (2.3)]{art9}, $\Re_{3}$ is derived as
\begin{align}
\label{r3}
\Re_{3}=&\int_{0}^{\infty}\gamma^{\frac{\eta_{t}}{2}+\frac{u_{e_{s}}}{2}+\frac{u_{d}}{r_{d}}-1}H_{0,1}^{1,0}\left[\sqrt{\frac{l_{t}\varphi}{\Omega_{t}v_{t}^{2}}}\gamma^{\frac{1}{2}}\biggl | 
    \begin{array}{c}
     -\\
     (0,1) \\ 
    \end{array}
    \right]
    \nonumber
    \\
\times&H_{1,r_{d}+1}^{r_{d},1}\left[\frac{\varphi\mathbb{E}(M_{d})^{r_{d}}}{\Omega_{r_{d}}(v_{d}r_{d})^{r_{d}}}\gamma\biggl | 
    \begin{array}{c}
     (1-\frac{u_{d}}{r_{d}},1) \\
     (0,1),(\frac{r_{d}-1}{r_{d}},1),(\frac{-u_{d}}{r_{d}},1) \\ 
    \end{array}
    \right]
\nonumber
\\
\times&H_{0,1}^{1,0}\left[\sqrt{\frac{l_{e_{s}}}{\Omega_{e_{s}}v_{e_{s}}^{2}}}\gamma^{\frac{1}{2}}\biggl | 
    \begin{array}{c}
     -\\
     (0,1)\\ 
    \end{array}
    \right]d\gamma
\nonumber
\\
=&2\varpi_{1}H_{1,0;0,1;1,r_{d}+1}^{1,0;1,0;r_{d},1}\left[
    \begin{array}{c}
     \varpi_{2}\\
     1;- \\ 
    \end{array}\biggl |    \begin{array}{c}
     -\\
     (0,1)
    \end{array}\biggl | \begin{array}{c}
     \varpi_{3}\\
     \varpi_{4}
    \end{array}\biggl | \begin{array}{c}
     \varpi_{5}, \varpi_{6} \\ 
    \end{array}  
    \right],
\end{align}
where 
\setcounter{eqnback}{\value{equation}}
\setcounter{equation}{33}
\begin{figure*}[!b]
\hrulefill
\begin{align}
\label{sop2}SOP^{II}=&1+\sum_{\eta_{t}=0}^{u_{t}-1}\frac{1}{\eta_{t}!}\left(\sqrt{\frac{l_{t}(\varphi-1)}{\Omega_{t}v_{t}^{2}}}\right)^{\eta_{t}}G_{0,1}^{1,0}\left[\sqrt{\frac{l_{t}(\varphi-1)}{\Omega_{t}v_{t}^{2}}}\biggl | 
    \begin{array}{c}
     -\\
     0 \\ 
    \end{array}
    \right]\Biggl\{\frac{\varphi^{-\frac{u_{e_{r}}}{r_{e_{r}}}}\Omega_{r_{e_{r}}}^{-\frac{u_{e_{r}}}{r_{e_{r}}}}\Omega_{r_{d}}^{-\frac{u_{d}}{r_{d}}}\mathbb{E}(M_{e_{r}})^{u_{e_{r}}}\mathbb{E}(M_{d})^{u_{d}}}{\sqrt{r_{e_{r}}}\sqrt{r_{d}}\Gamma(u_{e_{r}})\Gamma(u_{d})v_{e_{r}}^{u_{e_{r}}}v_{d}^{u_{d}}(2\pi)^{\frac{r_{d}-1}{2}}\left(2\pi\right)^{\frac{r_{e_{r}}-1}{2}}}
    \nonumber
    \\
\times&\Biggl(\frac{\mathbb{E}(M_{d})^{r_{d}}}{\Omega_{r_{d}}(v_{d}r_{d})^{r_{d}}}\Biggl)^{-\left(\frac{u_{d}}{r_{d}}+\frac{u_{e_{r}}}{r_{e_{r}}}\right)}G_{r_{e_{r}}+1,r_{e_{r}}+1}^{r_{e_{r}}+1,r_{d}}\left[\frac{\mathbb{E}(M_{e_{r}})^{{r_{e_{r}}}}(v_{d}r_{d})^{r_{d}}\Omega_{r_{d}}}{\mathbb{E}(M_{d})^{r_{d}}(v_{e_{r}}r_{e_{r}})^{r_{e_{r}}}\Omega_{e_{r}}\phi }\biggl | 
    \begin{array}{c}
     \varpi_{7},\varpi_{7}-\mathcal{S}_{r_{d}},\varpi_{7}-\mathcal{S}_{r_{d}+1}\\
     0,\frac{u_{e_{r}}}{r_{e_{r}}},\frac{r_{e_{r}}-1}{r_{e_{r}}} \\ 
    \end{array}
    \right]-1\Biggl\},
\end{align} 
\end{figure*}
\setcounter{eqnback}{\value{equation}}
\setcounter{equation}{34}
\begin{figure*}[!b]
\hrulefill
\begin{align}
\label{sop2a}SOP^{II}_{\infty}=&1+\sum_{\eta_{t}=0}^{u_{t}-1}\frac{1}{\eta_{t}!}\left(\sqrt{\frac{l_{t}(\varphi-1)}{\Omega_{t}v_{t}^{2}}}\right)^{\eta_{t}}
\Biggl\{\sum_{i=1}^{r_{d}}\frac{\varphi^{-\frac{u_{e_{r}}}{r_{e_{r}}}}\Omega_{r_{e_{r}}}^{-\frac{u_{e_{r}}}{r_{e_{r}}}}\Omega_{r_{d}}^{-\frac{u_{d}}{r_{d}}}\mathbb{E}(M_{e_{r}})^{u_{e_{r}}}\mathbb{E}(M_{d})^{u_{d}}}{\sqrt{r_{e_{r}}}\sqrt{r_{d}}\Gamma(u_{e_{r}})\Gamma(u_{d})v_{e_{r}}^{u_{e_{r}}}v_{d}^{u_{d}}(2\pi)^{\frac{r_{d}-1}{2}}\left(2\pi\right)^{\frac{r_{e_{r}}-1}{2}}}
    \nonumber
    \\
\times&\Biggl(\frac{\mathbb{E}(M_{d})^{r_{d}}}{\Omega_{r_{d}}(v_{d}r_{d})^{r_{d}}}\Biggl)^{-\left(\frac{u_{d}}{r_{d}}+\frac{u_{e_{r}}}{r_{e_{r}}}\right)}
\Biggl(\frac{\mathbb{E}(M_{e_{r}})^{{r_{e_{r}}}}(v_{d}r_{d})^{r_{d}}\Omega_{r_{d}}}{\mathbb{E}(M_{d})^{r_{d}}(v_{e_{r}}r_{e_{r}})^{r_{e_{r}}}\Omega_{e_{r}}\phi}\Biggl)^{H_{i}-1}\frac{\prod_{\kappa=1;\kappa\neq i}^{r_{d}}\Gamma(H_{i}-H_{\kappa})\prod_{\kappa=1}^{r_{e_{r}}+1}\Gamma(1+G_{\kappa}-H_{i})}{\prod_{\kappa=r_{d}+1}^{r_{e_{r}+1}}\Gamma(1+H_{\kappa}-H_{i})}\Biggl\},
\end{align}
\end{figure*}
$\varpi_{1}=\left(\sqrt{\frac{l_{t}\varphi}{\Omega_{t}v_{t}^{2}}}\right)^{-\left(\eta_{t}+u_{e_{s}}+\frac{2u_{d}}{r_{d}}\right)}$,
$\varpi_{2}=(1-\eta_{t}-u_{e_{s}}-\frac{2u_{d}}{r_{d}};1,2)$,
$\varpi_{3}=(1-\frac{u_{d}}{r_{d}},1)$,
$\varpi_{4}=\{ (0,1),(\frac{r_{d}-1}{r_{d}},1),(\frac{-u_{d}}{r_{d}},1)\}$,
$\varpi_{5}=\sqrt{\frac{l_{e_{s}}\Omega_{t}v_{t}^{2}}{l_{t}\varphi\Omega_{e_{s}}v_{e_{s}}^{2}}}$, and
$\varpi_{6}=\frac{\Omega_{t}v_{t}^{2}\mathbb{E}(M_{d})^{r_{d}}}{l_{t}\Omega_{r_{d}}(v_{d}r_{d})^{r_{d}}}$.
\\
Now, substituting Eqs. \eqref{r1}, \eqref{r2}, and \eqref{r3} in Eq. \eqref{sop1} and performing simple arithmetical manipulations, the analytical expression of SOP (Scenario-$I$) is derived. Hence, the proof is completed.
\end{proof}

\begin{remark}
  \label{re1} 
Assuming that $l_{t}=l_{e_{s}}$, $\Omega_{t}=\Omega_{e_{s}}$, $u_{t}=u_{e_{s}}$, $v_{t}=v_{e_{s}}$, and $N_{t}=N_{e_{s}}$, $\mathcal{E}1$ can intercept confidential information via $\mathcal{I}_{\mathcal{R}}$. Similarly, $\mathcal{E}2$ can intercept data transmission via $\mathcal{I}_{\mathcal{D}}$ under the assumption that $\Omega_{r_{d}}=\Omega_{r_{e_{r}}}$, $u_{d}=u_{e_{r}}$, $v_{d}=v_{e_{r}}$, and $N_{d}=N_{e_{r}}$. 
\end{remark}

\noindent  \textit{Asymptotic Analysis:}
By substituting \eqref{feqasy} and \eqref{t1} into \eqref{S1} and using the formula of \cite[Eq. (7.811.4)]{art1}, the asymptotic form of $SOP^{I}$ is obtained  as
\setcounter{eqnback}{\value{equation}}
\setcounter{equation}{30}
\begin{align}
    \label{sopa}    SOP^{I}_{\infty}=&\frac{(\frac{l_{e_{s}}}{\Omega_{e_{s}}})^{\frac{u_{e_{s}}}{2}} }{\Gamma(u_{e_{s}})v_{e_{s}}^{u_{e_{s}}}}\Biggl[\frac{1}{u_{t}!}\Biggl(\sqrt{\frac{l_{t}}{\Omega_{t}v_{t}^{2}}}\Biggl)^{u_{t}}  \Biggl(\sqrt{\frac{l_{e_{s}}}{\Omega_{e_{s}}v_{e_{s}}^{2}}}\Biggl)^{-u_{t}-u_{e_{s}}}
\nonumber
\\
\times&\Gamma(u_{t}+u_{e_{s}})+\sum_{i=1}^{r_{d}}\Biggl(\sqrt{\frac{l_{e_{s}}}{\Omega_{e_{s}}v_{e_{s}}^{2}}}\Biggl)^{-(u_{e_{s}}+\frac{2u_{d}}{r_{d}}+2p_{i}-2)}
\nonumber
\\
\times&\frac{\mathcal{M}_{1}}{\sqrt{r_{d}}(2\pi)^{\frac{r_{d}-1}{2}}}
\frac{\prod_{\kappa=1;\kappa\neq i}^{r_{d}}(P_{i}-P_{\kappa})\Gamma(1+\frac{u_{d}}{r_{d}}-P_{i})}{\Gamma(1+P_{r_{d}+1}-P_{i})}
\nonumber
\\
\times&\Biggl(\frac{\mathbb{E}(M_{d})^{r_{d}}}{\Omega_{r_{d}}(v_{d}r_{d})^{r_{d}}}\Biggl)^{P_{i}-1}\Gamma\biggl(u_{e_{s}}+\frac{2u_{d}}{r_{d}}+2p_{i}-2\biggl)
\Biggl].
\end{align}

Scenario-$II$: In this case, it is assumed that eavesdroppers are situated underwater, intercepting the optical link with the intention of obtaining confidential information. For this scenario, the SOP can be expressed mathematically as \cite[Eq. (13)]{art11}
\begin{align}
SOP^{II}=&Pr\{C_{0}\leq R_{0}\}=Pr\{\gamma_{eq}\leq \varphi \gamma_{e_{r}}+\varphi-1\}
\nonumber
\\
=&\int_{0}^{\infty}\int_{\varphi\gamma+\varphi-1}^{\infty}f_{\gamma_{eq}}(\gamma)f_{\gamma_{e_{r}}}(\gamma)d\gamma_{eq}d\gamma
\nonumber
\\
=&\int_{0}^{\infty}F_{\gamma_{d}}(\varphi\gamma+\varphi-1)f_{\gamma_{e_{r}}}(\gamma)d\gamma
\nonumber
\\
\times& \left(1- F_{\gamma_{t}}(\varphi-1)\right)+F_{\gamma_{t}}(\varphi-1).
\end{align}
Since it is difficult to determine the exact closed-form formula of SOP through mathematical calculations, we provide the analytical expression of SOP considering the lower bound of SOP as \cite[Eq. (14)]{art11}
\begin{align}
\label{sop2aa}
    SOP^{II} \ge SOP^{L}=&Pr\{\gamma_{eq}\leq\varphi \gamma_{e_{r}}\}
\nonumber
\\
=&\int_{0}^{\infty}F_{\gamma_{eq}}(\varphi\gamma)f_{\gamma_{e_{s}}}(\gamma)d\gamma\
\nonumber
\\
=&\int_{0}^{\infty}F_{\gamma_{d}}(\varphi\gamma)f_{\gamma_{e_{r}}}(\gamma)d\gamma
\nonumber
\\
\times& \left(1- F_{\gamma_{t}}(\varphi-1)\right)+F_{\gamma_{t}}(\varphi-1).
\end{align}
\begin{theorem}
The analytical expression of $SOP^{II}$ can be demonstrated as \eqref{sop2}, where $\varpi_{7}=1-\frac{u_{d}}{r_{d}}-\frac{u_{e_{r}}}{r_{e_{r}}}$ and
$\mathcal{S}=\begin{bmatrix}
    0 & \frac{-u_{e_{r}}}{r_{e_{r}}} & \frac{r_{e_{r}}-1}{r_{e_{r}}} 
\end{bmatrix}$. 
\end{theorem}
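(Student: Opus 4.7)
The starting point is the lower-bound formulation in \eqref{sop2aa}, which decomposes the SOP into the sum $F_{\gamma_t}(\varphi-1)+\bigl(1-F_{\gamma_t}(\varphi-1)\bigr)\int_0^\infty F_{\gamma_d}(\varphi\gamma)f_{\gamma_{e_r}}(\gamma)\,d\gamma$. The first contribution, $F_{\gamma_t}(\varphi-1)$, is handled directly by inserting \eqref{t2} evaluated at $\gamma=\varphi-1$; this is exactly what appears outside of the braces in \eqref{sop2}, and it accounts for both the additive ``$1$'' and the summation carrying the $G_{0,1}^{1,0}[\cdot]$ kernel. The nontrivial work therefore reduces to evaluating the single integral
\begin{align}
\mathcal{J}=\int_0^\infty F_{\gamma_d}(\varphi\gamma)\,f_{\gamma_{e_r}}(\gamma)\,d\gamma,\nonumber
\end{align}
and then multiplying by $1-F_{\gamma_t}(\varphi-1)$, whose series part cancels with the corresponding additive term, leaving the clean product structure displayed in \eqref{sop2}.

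Next, I would substitute \eqref{d2} for $F_{\gamma_d}(\varphi\gamma)$ and \eqref{d1} for $f_{\gamma_{e_r}}(\gamma)$ into $\mathcal{J}$. The CDF already appears as a Meijer $G$-function of argument $\gamma$, but the PDF carries a $G_{0,1}^{1,0}[\cdot\,\gamma^{1/r_{e_r}}]$ whose argument has a fractional power. To apply a single-variable product-of-$G$-functions integral, I would recast that kernel at argument $\gamma$ using the Gauss multiplication formula for Meijer $G$-functions, \cite[Eq.~(8.2.2.14)]{art7}, yielding a $G_{0,r_{e_r}}^{r_{e_r},0}$ with shifted lower indices $0,\tfrac{1}{r_{e_r}},\ldots,\tfrac{r_{e_r}-1}{r_{e_r}}$ and the prefactor $\sqrt{r_{e_r}}/(2\pi)^{(r_{e_r}-1)/2}$ that one sees in the denominator of \eqref{sop2}. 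After this step both Meijer kernels have a common variable $\gamma$, and the monomial prefactors combine into a single $\gamma^{u_d/r_d+u_{e_r}/r_{e_r}-1}$ weight.

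With the integrand in this standard form, the integral is closed by the product formula \cite[Eq.~(2.24.1.1)]{art7} (or equivalently the Fox $H$ version already used for $\mathcal{X}_2$ and $\mathcal{X}_4$ in Theorem~1), which converts $\int_0^\infty \gamma^{\rho-1} G^{m_1,n_1}_{p_1,q_1}[\sigma\gamma]\,G^{m_2,n_2}_{p_2,q_2}[\tau\gamma]\,d\gamma$ into a single Meijer $G$ of order $(p_1+p_2,q_1+q_2)$ with argument $\sigma/\tau$ (up to normalization). Collecting the normalization constants, powers of $\varphi$, $\Omega_{r_d}$, $\Omega_{r_{e_r}}$, and the moments $\mathbb{E}(M_d),\mathbb{E}(M_{e_r})$ furnishes the prefactor shown in \eqref{sop2}, while the combined parameter lists produce the target $G_{r_{e_r}+1,r_{e_r}+1}^{r_{e_r}+1,r_d}$ whose upper row is $\varpi_7,\varpi_7-\mathcal{S}_{r_d},\varpi_7-\mathcal{S}_{r_d+1}$ and whose lower row is $0,\tfrac{u_{e_r}}{r_{e_r}},\tfrac{r_{e_r}-1}{r_{e_r}}$.

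The principal obstacle is purely bookkeeping: I expect the hardest step to be matching, in a transparent way, the indices produced by the multiplication-formula expansion of $G_{0,1}^{1,0}$ with those introduced by the CDF kernel so that the final parameter vector reduces to the compact form that uses $\varpi_7$ and the auxiliary vector $\mathcal{S}$. Once that identification is performed carefully and the outer multiplicative factor $1-F_{\gamma_t}(\varphi-1)$ is distributed, the constant ``$1$'' and the series term in the curly braces of \eqref{sop2} arise naturally from the $+F_{\gamma_t}(\varphi-1)$ contribution in \eqref{sop2aa}, completing the derivation.
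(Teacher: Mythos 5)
Your proposal is correct and follows essentially the same route as the paper: substitute \eqref{t2}, \eqref{d1}, and \eqref{d2} into the lower-bound decomposition \eqref{sop2aa}, observe that the $F_{\gamma_t}(\varphi-1)$ term produces the leading ``$1$'' and the outer summation while $1-F_{\gamma_t}(\varphi-1)$ multiplies the remaining integral, and close that integral with \cite[Eq.~(2.24.1.1)]{art7}. Your explicit intermediate use of the multiplication formula \cite[Eq.~(8.2.2.14)]{art7} to bring the $\gamma^{1/r_{e_r}}$ kernel of \eqref{d1} to a common argument is a sensible elaboration of a step the paper leaves implicit (and which accounts for the $\sqrt{r_{e_r}}(2\pi)^{(r_{e_r}-1)/2}$ factors in \eqref{sop2}), but it does not constitute a different method.
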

\begin{proof}
By substituting Eqs. \eqref{t2}, \eqref{d1}, and \eqref{d2} into Eq. \eqref{sop2aa} and subsequently applying the identity of \cite[Eq. (2.24.1.1)]{art7}, the closed form expression of $SOP^{II}$ is obtained. Hence, the proof is completed.
\end{proof}
\noindent 
\\
\textit{Asymptotic Analysis:}
Similar to Scenario-$I$, the asymptotic $SOP^{II}$ are analyzed for high SNR region as shown in \eqref{sop2a} by using the similar mathematical formula as utilized in \eqref{f2}, where $H=\begin{bmatrix}
    \varpi_{7} & \varpi_{7}-\mathcal{S}_{r_{d}} & \varpi_{7}-\mathcal{S}_{r_{d}+1} 
\end{bmatrix}$ and $G=\begin{bmatrix}
    0 & \frac{u_{e_{r}}}{r_{e_{r}}} & \frac{r_{e_{r}}-1}{r_{e_{r}}} 
\end{bmatrix}$ are two constants.

Scenario-$III$: In this case, we consider the existence of two distinct eavesdroppers, each with the ability to intercept identical information from two separate channels, such as RF and optical links. Mathematically, $SOP^{III}$ can be expressed as 
\setcounter{eqnback}{\value{equation}}
\setcounter{equation}{35}
\begin{align}
\label{sop3}
    SOP^{III}=1-(SOP_{u}\times SOP_{r}),
\end{align}
where 
\begin{align}
\label{so1}
    SOP_{u}=&1-\int_{0}^{\infty}F_{\gamma_{t}}(\varphi\gamma)f_{\gamma_{e_{s}}}(\gamma)d\gamma.
\\
\label{so2}     
 SOP_{r}=&1-\int_{0}^{\infty}F_{\gamma_{d}}(\varphi\gamma)f_{\gamma_{e_{r}}}(\gamma)d\gamma.
\end{align}

\begin{theorem}
Assuming simultaneous eavesdropping attack, the closed-form expression of SOP is obtained as \eqref{sopp3}.
\end{theorem}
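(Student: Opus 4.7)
The plan is to evaluate the two component expressions $SOP_u$ and $SOP_r$ separately and then combine them via \eqref{sop3}. Both components have the same structural form as integrals already handled in the proofs of Theorems~1 and~2, so the derivation will essentially recycle those techniques with relabelled indices.

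First, I would tackle $SOP_u$ in \eqref{so1}. Substituting the CDF $F_{\gamma_t}(\cdot)$ from \eqref{t2} and the PDF $f_{\gamma_{e_s}}(\cdot)$ from \eqref{t1} into the integrand yields, up to a finite sum over $\eta_t$, a product of the polynomial factor $\gamma^{\frac{\eta_t}{2}+\frac{u_{e_s}}{2}-1}$ with two Meijer's $G$ functions of the $G_{0,1}^{1,0}[\cdot]$ type whose arguments are proportional to $\gamma^{1/2}$. This is structurally identical to the integral $\Re_2$ that appeared in the proof of Theorem~2, so I would invoke the product-integral identity \cite[Eq. (2.24.1.1)]{art7} to collapse the two $G$ functions into a single $G_{1,1}^{1,1}[\cdot]$, giving a closed-form expression in terms of the RF fading parameters, the threshold $\varphi$, and the average SNRs $\Omega_t$, $\Omega_{e_s}$.

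Next, I would derive $SOP_r$ in \eqref{so2}. Substituting $F_{\gamma_d}(\cdot)$ from \eqref{d2} and $f_{\gamma_{e_r}}(\cdot)$ from \eqref{d1} produces an integral of a power of $\gamma$ multiplied by a $G_{0,1}^{1,0}[\cdot]$ (from the mEGG PDF of the eavesdropping UOWC link) and a $G_{1,r_d+1}^{r_d,1}[\cdot]$ (from the mEGG CDF of the legitimate UOWC link). This is exactly the structure encountered in the proof of Theorem~3 for $SOP^{II}$, so applying \cite[Eq. (2.24.1.1)]{art7} again, with the appropriate identifications of parameters, reduces it to a single Meijer's $G_{r_{e_r}+1,r_{e_r}+1}^{r_{e_r}+1,r_d}[\cdot]$ function. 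The algebra here only requires careful bookkeeping of the pole/zero parameters $\varpi_7$ and the vector $\mathcal{S}$ that already appear in \eqref{sop2}.

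Finally, I would multiply the two closed-form expressions $SOP_u$ and $SOP_r$ and substitute into \eqref{sop3}, giving $SOP^{III}=1-SOP_u\,SOP_r$ in the desired closed form \eqref{sopp3}. The genuinely hard part is the second integral for $SOP_r$: unifying the mEGG-CDF's higher-order Meijer's $G$ with the mEGG-PDF while tracking the $r_d$- and $r_{e_r}$-dependent parameter lists is where an indexing slip is easiest, and ensuring that the resulting $G$-function indices match those stated in \eqref{sopp3} will require care. By contrast, $SOP_u$ and the final multiplication step are essentially routine once the two integrals are in hand, since independence of the RF and UOWC hops makes the product decomposition in \eqref{sop3} straightforward.
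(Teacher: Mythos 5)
Your proposal matches the paper's own derivation: the paper likewise evaluates $SOP_u$ by substituting \eqref{t1}--\eqref{t2} into \eqref{so1} and applying \cite[Eq. (7.811.4)]{art1} together with \cite[Eq. (2.24.1.1)]{art7} to obtain a $G_{1,1}^{1,1}$ term, evaluates $SOP_r$ by substituting \eqref{d1}--\eqref{d2} into \eqref{so2} and applying \cite[Eq. (2.24.1.1)]{art7} to obtain the $G_{r_{e_r}+1,r_{e_r}+1}^{r_{e_r}+1,r_d}$ term of \eqref{o2}, and then combines them through \eqref{sop3}. The only detail you leave implicit is that the constant contribution from the leading $1$ in $F_{\gamma_t}$ is handled via the normalization integral \cite[Eq. (7.811.4)]{art1}, exactly as in the $\Re_1$ step you reference.
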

\begin{proof}
Substituting Eqs. \eqref{t1} and \eqref{t2} into Eq. \eqref{so1} and using the identities of \cite[Eq. (7.811.4)]{art1} and \cite[Eq. (2.24.1.1)]{art7}, $ SOP_{u}$ is derived as
\setcounter{eqnback}{\value{equation}}
\setcounter{equation}{39}
\begin{align}
    SOP_{u}=&\sum_{\eta_{t}=0}^{u_{t}-1}\frac{v_{e_{s}}^{-u_{e_{s}}}}{\eta_{t}!\Gamma(u_{e_{s}})}\left(\sqrt{\frac{l_{t}}{\Omega_{t}v_{t}^{2}}}\right)^{\eta_{t}}\left(\sqrt{\frac{l_{t}\varphi}{\Omega_{t}v_{t}^{2}}}\right)^{-\eta_{t}-u_{e_{s}}}
\nonumber
\\
\times&\biggl(\frac{l_{e_{s}}}{\Omega_{e_{s}}}\biggl)^{\frac{u_{e_{s}}}{2}}G_{1,1}^{1,1}\left[\sqrt{\frac{l_{e_{s}}\Omega_{t}v_{t}^{2}}{l_{t}\phi\Omega_{e_{s}}v_{e_{s}}^{2}}}\biggl | 
    \begin{array}{c}
     1-\eta_{t}-u_{e_{s}}\\
     0 \\ 
    \end{array}
    \right].
\end{align}
Now, substituting Eqs. \eqref{d1} and \eqref{d2} into Eq. \eqref{so2} and utilizing the identity of \cite[Eq. (2.24.1.1)]{art7}, $SOP_{r}$ is obtained as shown in \eqref{o2}.
\end{proof}
\setcounter{eqnback}{\value{equation}}
\setcounter{equation}{38}
\begin{figure*}[!b]
\hrulefill
\begin{align}
    \label{sopp3}SOP^{III}=&1-\Biggl\{\sum_{\eta_{t}=0}^{u_{t}-1}\frac{1}{\eta_{t}!\Gamma(u_{e_{s}})v_{e_{s}}^{u_{e_{s}}}}\left(\sqrt{\frac{l_{t}}{\Omega_{t}v_{t}^{2}}}\right)^{\eta_{t}}\left(\sqrt{\frac{l_{t}\varphi}{\Omega_{t}v_{t}^{2}}}\right)^{-\eta_{t}-u_{e_{s}}}\biggl(\frac{l_{e_{s}}}{\Omega_{e_{s}}}\biggl)^{\frac{u_{e_{s}}}{2}}G_{1,1}^{1,1}\left[\sqrt{\frac{l_{e_{s}}\Omega_{t}v_{t}^{2}}{l_{t}\phi\Omega_{e_{s}}v_{e_{s}}^{2}}}\biggl | 
    \begin{array}{c}
     1-\eta_{t}-u_{e_{s}}\\
     0 \\ 
    \end{array}
    \right]\Biggl\}
\nonumber
\\
\times&\Biggl\{1-\frac{\varphi^{-\frac{u_{e_{r}}}{r_{e_{r}}}}\Omega_{r_{e_{r}}}^{-\frac{u_{e_{r}}}{r_{e_{r}}}}\Omega_{r_{d}}^{-\frac{u_{d}}{r_{d}}}\mathbb{E}(M_{e_{r}})^{u_{e_{r}}}\mathbb{E}(M_{d})^{u_{d}}}{\sqrt{r_{e_{r}}}\sqrt{r_{d}}\Gamma(u_{e_{r}})\Gamma(u_{d})v_{e_{r}}^{u_{e_{r}}}v_{d}^{u_{d}}(2\pi)^{\frac{r_{d}-1}{2}}\left(2\pi\right)^{\frac{r_{e_{r}}-1}{2}}}\Biggl(\frac{\mathbb{E}(M_{d})^{r_{d}}}{\Omega_{r_{d}}(v_{d}r_{d})^{r_{d}}}\Biggl)^{-\left(\frac{u_{d}}{r_{d}}+\frac{u_{e_{r}}}{r_{e_{r}}}\right)}
    \nonumber
    \\
\times&G_{r_{e_{r}}+1,r_{e_{r}}+1}^{r_{e_{r}}+1,r_{d}}\left[\frac{\mathbb{E}(M_{e_{r}})^{{r_{e_{r}}}}(v_{d}r_{d})^{r_{d}}\Omega_{r_{d}}}{\mathbb{E}(M_{d})^{r_{d}}(v_{e_{r}}r_{e_{r}})^{r_{e_{r}}}\Omega_{e_{r}}\phi }\biggl | 
    \begin{array}{c}
     \varpi_{7},\varpi_{7}-\mathcal{S}_{r},\varpi_{7}-\mathcal{S}_{r+1}\\
     0,\frac{u_{e_{r}}}{r_{e_{r}}},\frac{r_{e_{r}}-1}{r_{e_{r}}} \\ 
    \end{array}
    \right]\Biggl\}.
\end{align} 
\end{figure*}
\setcounter{eqnback}{\value{equation}}
\setcounter{equation}{40}
\begin{figure*}[!b]
\hrulefill
\begin{align}
    \label{o2}SOP_{r}=&1-\frac{\varphi^{-\frac{u_{e_{r}}}{r_{e_{r}}}}\Omega_{r_{e_{r}}}^{-\frac{u_{e_{r}}}{r_{e_{r}}}}\Omega_{r_{d}}^{-\frac{u_{d}}{r_{d}}}\mathbb{E}(M_{e_{r}})^{u_{e_{r}}}\mathbb{E}(M_{d})^{u_{d}}}{\sqrt{r_{e_{r}}}\sqrt{r_{d}}\Gamma(u_{e_{r}})\Gamma(u_{d})v_{e_{r}}^{u_{e_{r}}}v_{d}^{u_{d}}(2\pi)^{\frac{r_{d}-1}{2}}\left(2\pi\right)^{\frac{r_{e_{r}}-1}{2}}}\Biggl(\frac{\mathbb{E}(M_{d})^{r_{d}}}{\Omega_{r_{d}}(v_{d}r_{d})^{r_{d}}}\Biggl)^{-\left(\frac{u_{d}}{r_{d}}+\frac{u_{e_{r}}}{r_{e_{r}}}\right)}
    \nonumber
    \\
\times&G_{r_{e_{r}}+1,r_{e_{r}}+1}^{r_{e_{r}}+1,r_{d}}\left[\frac{\mathbb{E}(M_{e_{r}})^{{r_{e_{r}}}}(v_{d}r_{d})^{r_{d}}\Omega_{r_{d}}}{\mathbb{E}(M_{d})^{r_{d}}(v_{e_{r}}r_{e_{r}})^{r_{e_{r}}}\Omega_{e_{r}}\phi }\biggl | 
    \begin{array}{c}
     \varpi_{7},\varpi_{7}-\mathcal{S}_{r},\varpi_{7}-\mathcal{S}_{r+1}\\
     0,\frac{u_{e_{r}}}{r_{e_{r}}},\frac{r_{e_{r}}-1}{r_{e_{r}}} \\ 
    \end{array}
    \right].
\end{align} 
\end{figure*}
\setcounter{eqnback}{\value{equation}}
\setcounter{equation}{41}
\begin{figure*}
\begin{align}
    \label{sopp3a}SOP^{III}_{\infty}=&1-\Biggl\{\sum_{\eta_{t}=0}^{u_{t}-1}\frac{\Gamma(\eta_{t}+u_{e_{s}})}{\eta_{t}!\Gamma(u_{e_{s}})v_{e_{s}}^{u_{e_{s}}}}\left(\sqrt{\frac{l_{t}}{\Omega_{t}v_{t}^{2}}}\right)^{\eta_{t}}\left(\sqrt{\frac{l_{t}\varphi}{\Omega_{t}v_{t}^{2}}}\right)^{-\eta_{t}-u_{e_{s}}}\biggl(\frac{l_{e_{s}}}{\Omega_{e_{s}}}\biggl)^{\frac{u_{e_{s}}}{2}}\Bigg(\sqrt{\frac{l_{e_{s}}\Omega_{t}v_{t}^{2}}{l_{t}\phi\Omega_{e_{s}}v_{e_{s}}^{2}}}\Biggl)^{\eta_{t}-u_{e_{s}}}\Biggl\}
\nonumber
\\
\times&\Biggl\{1-\sum_{i=1}^{r_{d}}\frac{\varphi^{-\frac{u_{e_{r}}}{r_{e_{r}}}}\Omega_{r_{e_{r}}}^{-\frac{u_{e_{r}}}{r_{e_{r}}}}\Omega_{r_{d}}^{-\frac{u_{d}}{r_{d}}}\mathbb{E}(M_{e_{r}})^{u_{e_{r}}}\mathbb{E}(M_{d})^{u_{d}}}{\sqrt{r_{e_{r}}}\sqrt{r_{d}}\Gamma(u_{e_{r}})\Gamma(u_{d})v_{e_{r}}^{u_{e_{r}}}v_{d}^{u_{d}}(2\pi)^{\frac{r_{d}-1}{2}}\left(2\pi\right)^{\frac{r_{e_{r}}-1}{2}}}\Biggl(\frac{\mathbb{E}(M_{d})^{r_{d}}}{\Omega_{r_{d}}(v_{d}r_{d})^{r_{d}}}\Biggl)^{-\left(\frac{u_{d}}{r_{d}}+\frac{u_{e_{r}}}{r_{e_{r}}}\right)}
    \nonumber
    \\
\times&\Biggl(\frac{\mathbb{E}(M_{e_{r}})^{{r_{e_{r}}}}(v_{d}r_{d})^{r_{d}}\Omega_{r_{d}}}{\mathbb{E}(M_{d})^{r_{d}}(v_{e_{r}}r_{e_{r}})^{r_{e_{r}}}\Omega_{e_{r}}\phi}\Biggl)^{H_{i}-1}\frac{\prod_{\kappa=1;\kappa\neq i}^{r_{d}}\Gamma(H_{i}-H_{\kappa})\prod_{\kappa=1}^{r_{e_{r}}+1}\Gamma(1+G_{\kappa}-H_{i})}{\prod_{\kappa=r_{d}+1}^{r_{e_{r}+1}}\Gamma(1+H_{\kappa}-H_{i})}\Biggl\}.
\end{align} 
\hrulefill
\end{figure*}
\\
\noindent 
\textit{Asymptotic analysis:}
Using the same identity as used in Eqs. \eqref{f2} and \eqref{sop2a}, the asymptotic representation of $SOP^{III}$ is obtained finally as Eq. \eqref{sopp3a}.

\subsection{SPSC Analysis}
The significance of the probability of SPSC lies in its role as a crucial performance metric, ensuring the persistence of a data stream only when the secrecy capacity remains positive. Mathematically, the probability of SPSC can be written as \cite[Eq. (28)]{art12}
\begin{align}
\label{SPSC}
   SPSC=&Pr\{C_{0}>0\}
   \nonumber
   \\
=&1-SOP|_{R_{0}=0}
\end{align}

\begin{theorem}
The closed-form analytical expressions of SPSC can be expressed as
\begin{align}
\label{spsc1}
SPSC^{I}=&1-SOP^{I}|_{R_{0}=0} (Scenario-I)
\\
\label{spsc2}
SPSC^{II}=&1-SOP^{II}|_{R_{0}=0} (Scenario-II)
\\
\label{spsc3}
SPSC^{III}=&1-SOP^{III}|_{R_{0}=0} (Scenario-III)
\end{align}
\end{theorem}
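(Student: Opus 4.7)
The proof reduces to a direct application of the SPSC definition given in \eqref{SPSC}, namely $SPSC = \Pr\{C_0 > 0\} = 1 - SOP|_{R_0 = 0}$. Since closed-form expressions for $SOP^I$, $SOP^{II}$, and $SOP^{III}$ have already been obtained in \eqref{sop1}, \eqref{sop2}, and \eqref{sopp3}, respectively, and each is parameterized explicitly by $\varphi = 2^{R_0}$, the three claimed identities follow by specializing to $\varphi = 1$ and taking the complement.

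More concretely, the plan is to observe that the event $\{C_0 > 0\}$ is the complement of $\{C_0 \leq R_0\}|_{R_0 = 0}$ under the convention used throughout Section \ref{spa}. Applying this observation scenario by scenario, $SPSC^{I}$ is obtained from \eqref{sop1} with $\varphi = 1$, $SPSC^{II}$ from \eqref{sop2} with $\varphi = 1$, and $SPSC^{III}$ from \eqref{sopp3} with $\varphi = 1$. No further integral manipulation is required because each SOP expression is already in closed form in terms of Meijer's $G$ and Fox's $H$ functions.

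The only technical point worth checking is that the SOP formulas remain well-defined at the boundary $\varphi = 1$. The one place this matters is in \eqref{sop2}, where the term $F_{\gamma_t}(\varphi - 1)$ collapses to $F_{\gamma_t}(0)$; by \eqref{t2} and the vanishing behavior of $G_{0,1}^{1,0}\!\left[\,0\,\bigl|\,-;0\right]$ at zero argument, this contribution drops out, and the multiplicative factor $1 - F_{\gamma_t}(\varphi - 1)$ collapses to unity. In Scenarios-$I$ and $III$ no analogous boundary term appears, so the substitution $\varphi = 1$ is direct. Since all of this amounts to algebraic specializations of previously established expressions, the main hurdle is essentially bookkeeping rather than genuine computation, and the identities \eqref{spsc1}--\eqref{spsc3} follow at once.
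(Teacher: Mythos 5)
Your proposal is correct and follows essentially the same route as the paper, which likewise obtains the SPSC expressions by substituting the closed-form $SOP^{I}$, $SOP^{II}$, and $SOP^{III}$ into the definition $SPSC = 1 - SOP|_{R_{0}=0}$. One small quibble in your boundary check: $G_{0,1}^{1,0}[x\,|\,-;0] = e^{-x}$ equals $1$ (not $0$) at $x=0$, so $F_{\gamma_t}(0)=0$ follows because the $\eta_t=0$ term of the sum in \eqref{t2} evaluates to $1$ and cancels the leading $1$, but your conclusion that the factor $1-F_{\gamma_t}(\varphi-1)$ collapses to unity is still right.
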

\begin{proof}
By substituting Eqs. \eqref{sop1}, \eqref{sop2}, and \eqref{sop3} into Eq. \eqref{SPSC}, the probability of SPSC is obtained. Hence, the proof is completed.
\end{proof}
\subsection{EST Analysis}
EST serves as a performance metric that combines and emphasizes the dependability of the tapping channel along with the constraints of indemnity. It is the mean rate that ensures secure data transmission from the source to the destination without any interruption. According to \cite[Eq. (5)]{art13}, the EST can be written mathematically as 
\begin{align}
\label{EST}
    EST=R_{0}(1-SOP).
\end{align}
\begin{theorem}
The closed-form expressions of EST can be demonstrated as
\begin{align}
\label{est1}
EST^{I}=&R_{0}(1-SOP^{I}) (Scenario-I)
\\
\label{est2}
EST^{II}=&R_{0}(1-SOP^{II}) (Scenario-II)
\\
\label{est3}
EST^{III}=&R_{0}(1-SOP^{III}) (Scenario-III)
\end{align}
\end{theorem}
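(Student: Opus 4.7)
The plan is to invoke the definition in \eqref{EST}, namely $EST=R_{0}(1-SOP)$, and specialize the $SOP$ factor to each of the three eavesdropping scenarios already treated in the preceding theorems. Since the mapping $SOP\mapsto EST$ is affine with fixed coefficients $R_{0}$ and $-R_{0}$, no new probabilistic reasoning or special-function manipulation is needed beyond what has already been established in this section.

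First I would recall the three closed-form expressions for SOP derived earlier: $SOP^{I}$ given by \eqref{sop1}, $SOP^{II}$ given by \eqref{sop2}, and $SOP^{III}$ given by \eqref{sopp3}. Each is already in closed form in terms of Meijer's $G$ functions and bivariate Fox's $H$ functions, so direct substitution immediately yields closed-form expressions for the corresponding EST without introducing any new integrals, transforms, or special functions.

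Next I would substitute each of $SOP^{I}$, $SOP^{II}$, and $SOP^{III}$ into \eqref{EST} to produce \eqref{est1}, \eqref{est2}, and \eqref{est3}, respectively. Because the target secrecy rate $R_{0}$ is a deterministic design parameter entering only as a multiplicative constant on the outside (and implicitly through $\varphi=2^{R_{0}}$ inside the SOPs), no further algebraic simplification is required.

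The only genuine subtlety, if it can be called one, is ensuring consistency of the parameter $\varphi=2^{R_{0}}$ that already appears inside each $SOP^{(\cdot)}$: the same $R_{0}$ must be used both in the outer multiplicative factor and in the threshold $\varphi$ embedded in the argument of every Meijer/Fox function. This is a bookkeeping issue rather than a mathematical obstacle and is resolved by fixing $R_{0}$ once at the outset of the derivation. There is no asymptotic argument to carry out and no tightness of bound to verify beyond what is already inherited from the lower-bound SOP derivations; hence the proof reduces to three one-line substitutions, which completes the argument.
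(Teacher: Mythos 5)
Your proposal is correct and matches the paper's own proof, which likewise obtains \eqref{est1}--\eqref{est3} by substituting the closed-form SOP expressions for each scenario into the definition $EST=R_{0}(1-SOP)$ in \eqref{EST}. Your added remark on keeping $R_{0}$ consistent with the embedded threshold $\varphi=2^{R_{0}}$ is a sensible bookkeeping note but introduces nothing beyond the paper's one-line substitution argument.
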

\begin{proof}
By substituting Eqs. \eqref{sop1}, \eqref{sop2} and \eqref{sop3} into Eq. \eqref{EST}, the EST is derived. Hence, the proof is completed.
\end{proof}



\section{NUMERICAL RESULTS}
\label{NR}
This section provides graphical representations of numerical results derived from the analytical expressions outlined in Section \ref{spa}. Our focus is specifically on secrecy metrics such as ASC, SOP, the probability of SPSC, and EST analysis, as shown in
\eqref{ascf}, \eqref{sop1}, \eqref{sop2}, \eqref{sopp3}, \eqref{spsc1}, \eqref{spsc2}, \eqref{spsc3}, \eqref{est1}, \eqref{est2}, and \eqref{est3}. Utilizing these closed-form expressions, we investigate the influence of various system parameters ($N_{t}$, $N_{d}$, $N_{e_{s}}$, $N_{e_{r}}$, $\alpha_{t,s}$, $\alpha_{t,r}$, $\alpha_{e_{s},s}$, $\alpha_{e_{s},r}$, $\mu_{t,s}$, $\mu_{t,r}$, $\mu_{e_{s},s}$, $\mu_{e_{s},r}$, $\Omega_{t}$, $\Omega_{e_{s}}$, $\Omega_{r_{d}}$, $\Omega_{r_{e_{r}}}$, $\phi_{t,r}$, $\xi_{d}$, and $\xi_{e_{r}}$) on the secrecy performance of the proposed RF-UOWC mixed network. Additionally, we underscore our analytical insights into three distinct eavesdropping scenarios: RF eavesdropping, UOWC eavesdropping, and the simultaneous presence of both RF and UOWC eavesdropping. According to \cite{d1,10413214}, we make the following assumptions: $N_{t}=N_{d}=N_{e_{s}}=N_{e_{r}}=1$, $\alpha_{t,s}=\alpha_{e_{s},s}=\alpha_{t,r}=\alpha_{e_{s},r}=2$, $\mu_{t,s}=\mu_{t,r}=\mu_{e_{s},s}=\mu_{e_{s},r}=4$, $\phi_{t,s}=\phi_{t,r}=1$, $\xi_{d}=\xi_{e_{r}}=1$, $R_{0}=0.01$ bits/sec/Hz, $r=(1,2)$, $\Omega_{e_{s}}= 25$ dB, and $\Omega_{r_{d}}=\Omega_{r_{e_{r}}}= 20$ dB, unless specified otherwise. To validate the analytical conclusions, we conduct Monte Carlo (MC) simulations by generating $\alpha-\mu$ and mEGG random variables using MATLAB. This simulation entails averaging $10^{6}$ channel realizations to derive values for each secrecy indicator. The accuracy of the derived expressions is confirmed through a close alignment between the analytical findings and simulation results. Furthermore, the asymptotic analysis is performed, demonstrating a robust agreement in the high SNR regime with the analytical results.

\subsection{Impact of RIS reflecting elements}
To assess the interplay between security and RIS technology within the proposed RF-UOWC mixed model, the impacts of the RIS reflecting elements (i.e., $N_{t}$, $N_{d}$, $N_{e_{s}}$, and $N_{e_{r}}$) on the secrecy performance are examined in Figs. \ref{p12}-\ref{p11}.
\begin{figure}[!ht]
\vspace{0mm}
\centerline{\includegraphics[width=0.35\textwidth,angle =0]{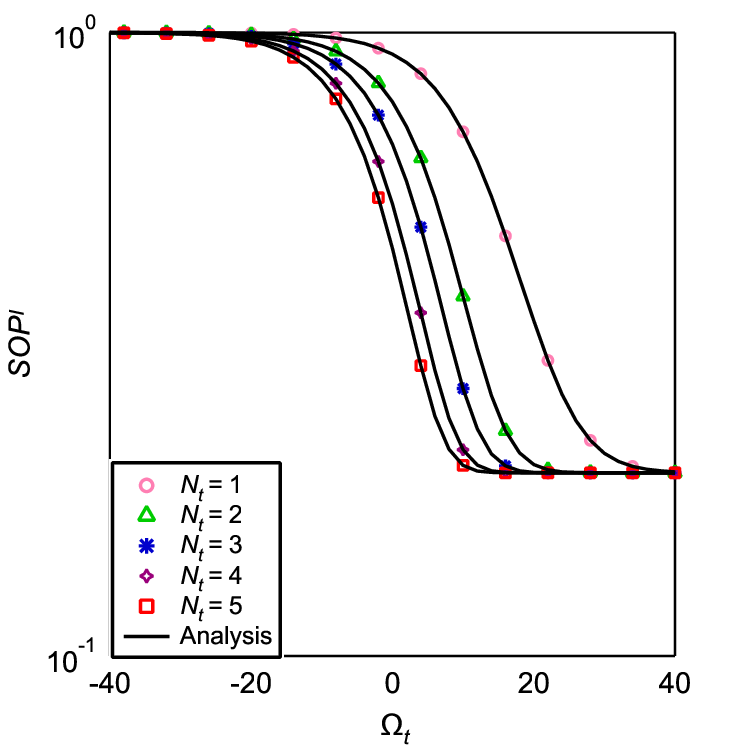}}
       \vspace{0mm}
   \caption{The $SOP^{I}$ versus $\Omega_{t}$ for chosen values of $N_{t}$.}
   \label{p12}
\end{figure}
Notably, it can be seen from Fig. \ref{p12} that SOP performance is better when the value of $N_t$ is higher. This is expected because, with more reflecting elements, the RIS can manipulate radio waves more precisely, focusing the signal towards the relay, $\mathcal{R}$. This leads to a stronger signal and potentially extends the coverage area.
\begin{figure}[!ht]
\vspace{0mm}
\centerline{\includegraphics[width=0.35\textwidth,angle =0]{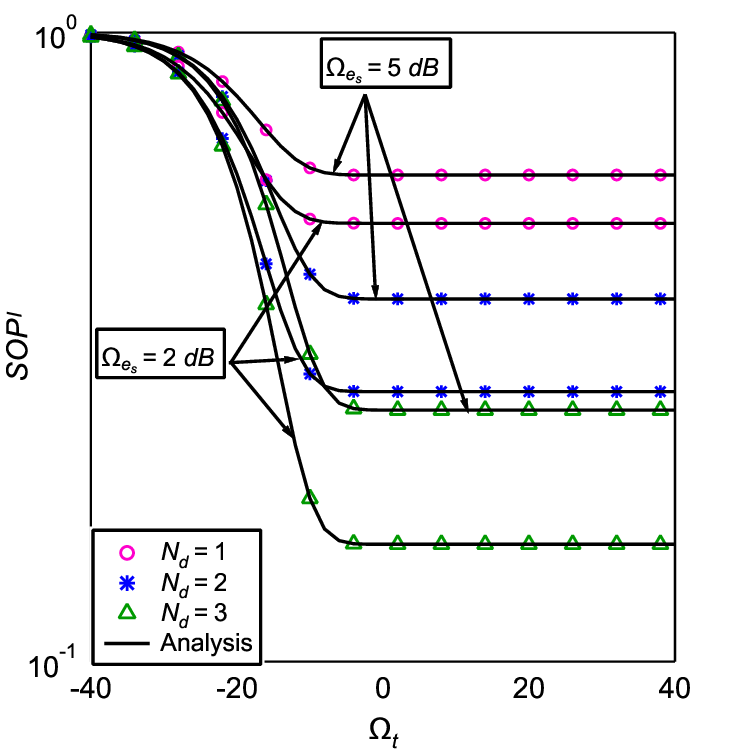}}
       \vspace{0mm}
   \caption{The $SOP^{I}$ versus $\Omega_{t}$ for chosen values of $N_{d}$ and $\Omega_{e_{s}}$.}
   \label{p6}
\end{figure}

In Fig. \ref{p6}, a similar positive impact is observed with an increased $N_{d}$, causing a significant reduction in $SOP^{I}$ when $N_{d}$ is raised from $1$ to $3$. Hence, this improvement plays a pivotal role in reducing the potential risk of eavesdropping, as a greater quantity of reflecting elements enables more precise and focused beam-forming. Consequently, this leads to an increased SNR at the receiver end, effectively mitigating the adverse effects of fading.

\begin{figure}[!ht]
\vspace{0mm}
\centerline{\includegraphics[width=0.35\textwidth,angle =0]{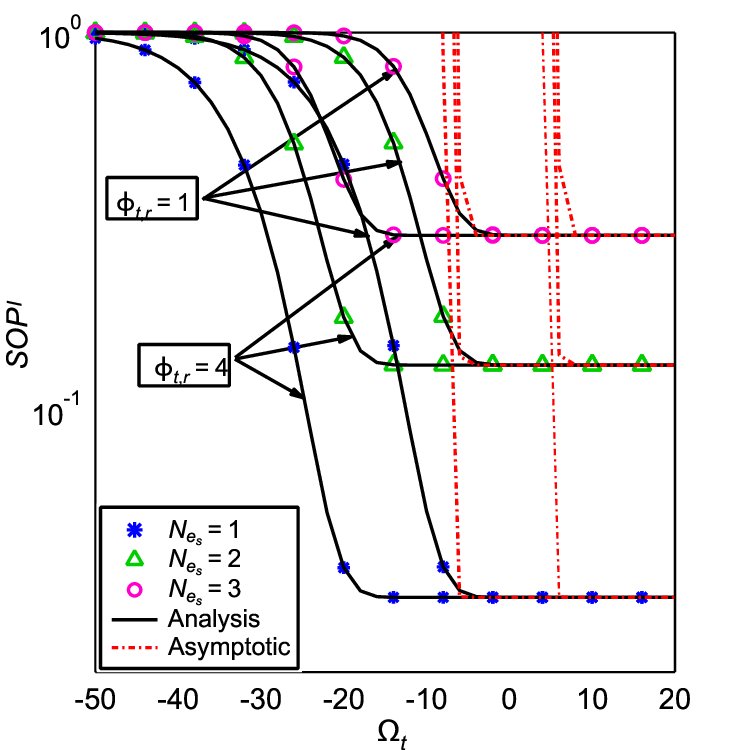}}
       \vspace{0mm}
   \caption{The $SOP^{I}$ versus $\Omega_{t}$ for chosen values of $N_{e_{s}}$ and $\phi_{t,r}$.}
   \label{p5}
\end{figure}
Despite the fact that increasing the number of reflecting elements of an RIS generally benefits secure communication,  Figs. \ref{p5}-\ref{p11} reveal a counter-intuitive effect. Here, an increase in the number of reflecting elements (i.e., $N_{e_{s}}$ and $N_{e_{r}}$) associated with eavesdroppers RIS (i.e., $\mathcal{I}_{{\mathcal{E}1}}$ and $\mathcal{I}_{{\mathcal{E}2}}$) corresponds to a heightened vulnerability to eavesdropping.
\begin{figure}[!ht]
\vspace{0mm}
\centerline{\includegraphics[width=0.35\textwidth,angle =0]{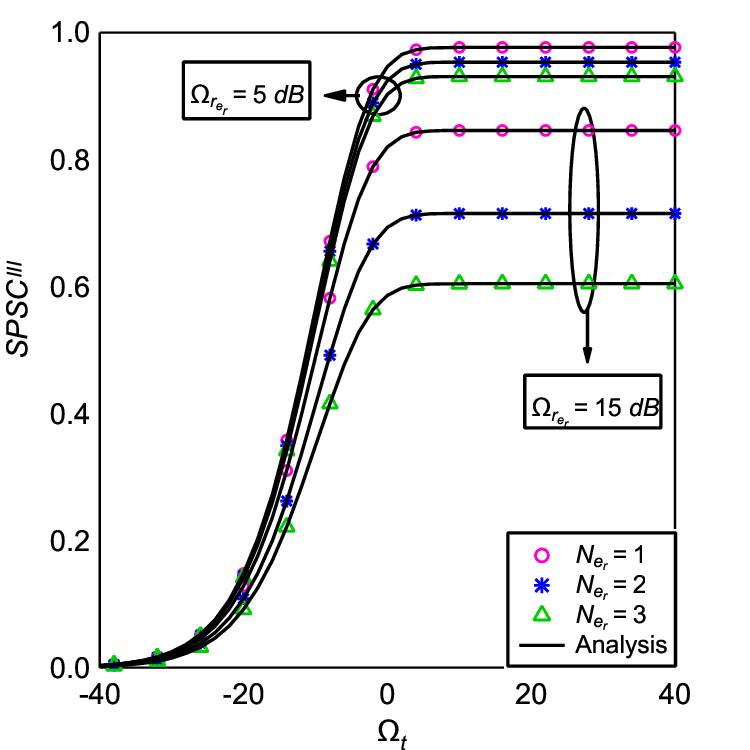}}
       \vspace{0mm}
   \caption{The $SPSC^{III}$ versus $\Omega_{t}$ for chosen values of $N_{e_{r}}$ and $\Omega_{r_{e_{r}}}$.}
   \label{p11}
\end{figure}
For instance, $SOP^{I}$ increases when $N_{e_{s}}$ is raised from $1$ to $3$. This is due to the fact that with more elements in $\mathcal{I}_{{\mathcal{E}1}}$, the eavesdropper can potentially capture a stronger or more focused version of the signal reflected by the RIS. This makes it easier for them to eavesdrop on the communication. Therefore, Fig. \ref{p11} aligns with these observations, demonstrating a decrease in the SPSC (Scenario-$III$) as the number of $N_{e_{r}}$ increases. This vulnerability arises because a larger $N_{e_{r}}$ facilitates eavesdropping by enabling the capture of a stronger or more focused signal reflection.

\subsection{Impact of RF fading parameters}Wireless communication, owing to its broadcast nature, is prone to eavesdropping and interception. Thus, analyzing the effects of RF fading parameters on secrecy performance is instrumental in alleviating security vulnerabilities. Consequently, Figs. \ref{p4} - \ref{p13} delve into the influence of fading parameters specifically concerning the $\mathcal{S}-\mathcal{I}_{{{\mathcal{R}}}}-\mathcal{R}$ link.
\begin{figure}[!ht]
\vspace{0mm}
\centerline{\includegraphics[width=0.35\textwidth,angle =0]{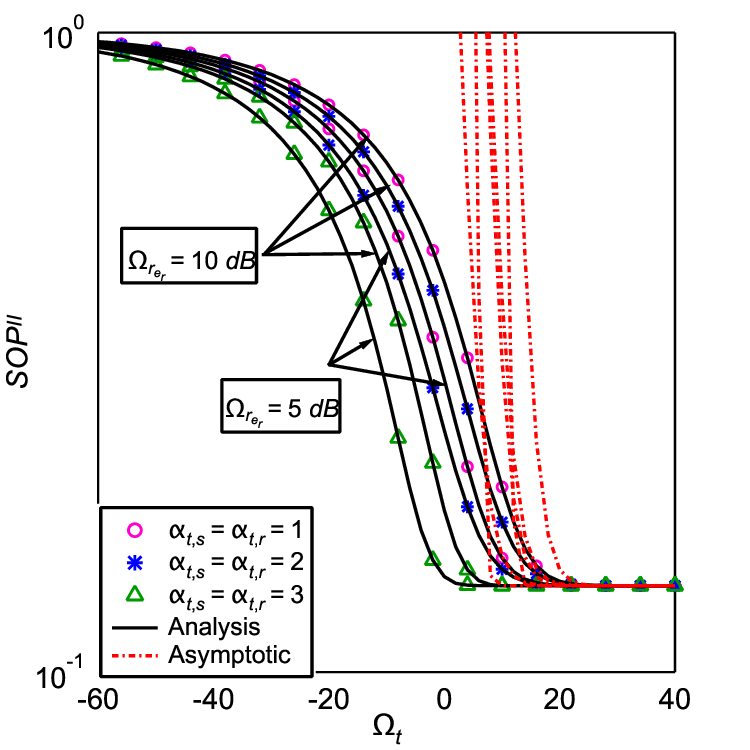}}
       \vspace{0mm}
   \caption{The $SOP^{II}$ versus $\Omega_{t}$ for different chosen values of $\alpha_{t,s}$, $\alpha_{t,r}$, and $\Omega_{r_{e_{r}}}$.}
   \label{p4}
\end{figure}
Observations from Fig. \ref{p4} reveal that increasing both $\alpha_{t,s}$ and $\alpha_{t,r}$ from $1$ to $3$ results in a decrease in $SOP^{II}$ (Scenario-$II$). Consequently, these findings suggest that increasing the fading parameters ($\alpha_{t,s}$ and $\alpha_{t,r}$) leads to a reduction in fading severity, thus improving the reliability of the signal received at $\mathcal{R}$. A similar positive outcome is observed in Fig. \ref{p10}, where the value of $SPSC^{II}$ increases when $\mu_{t,s}$ and $\mu_{t,r}$ rises from $1$ to $10$.
\begin{figure}[!ht]
\vspace{0mm}
\centerline{\includegraphics[width=0.35\textwidth,angle =0]{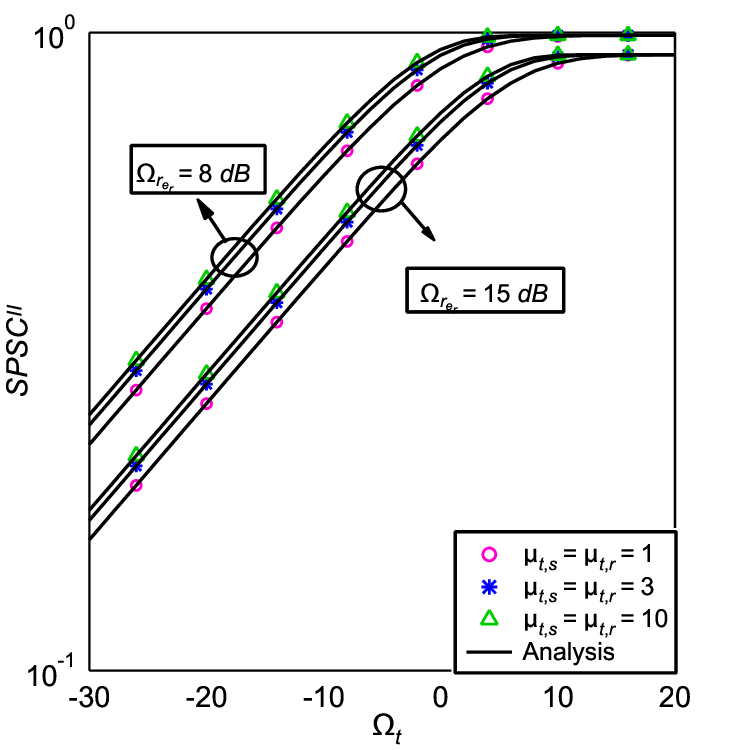}}
       \vspace{0mm}
   \caption{The $SPSC^{II}$ versus $\Omega_{t}$ for chosen values of $\mu_{t,s}$, $\mu_{t,r}$, and $\Omega_{r_{e_{r}}}$.}
   \label{p10}
\end{figure}
This is expected because higher $\mu$ introduce greater variability and unpredictability in the wireless channel. This increased randomness makes it harder for eavesdroppers to reliably capture and decode the transmitted signal, thus enhancing the secrecy of the communication. It is also observed in Fig. \ref{p10} that increasing $\Omega_{{r_{e_{r}}}}$ results in a degradation of security performance, as the eavesdropper benefits from higher signal quality in the $\mathcal{R}-\mathcal{I}_{{\mathcal{E}2}}-\mathcal{E}2$ link.

\begin{figure}[!ht]
\vspace{0mm}
\centerline{\includegraphics[width=0.35\textwidth,angle =0]{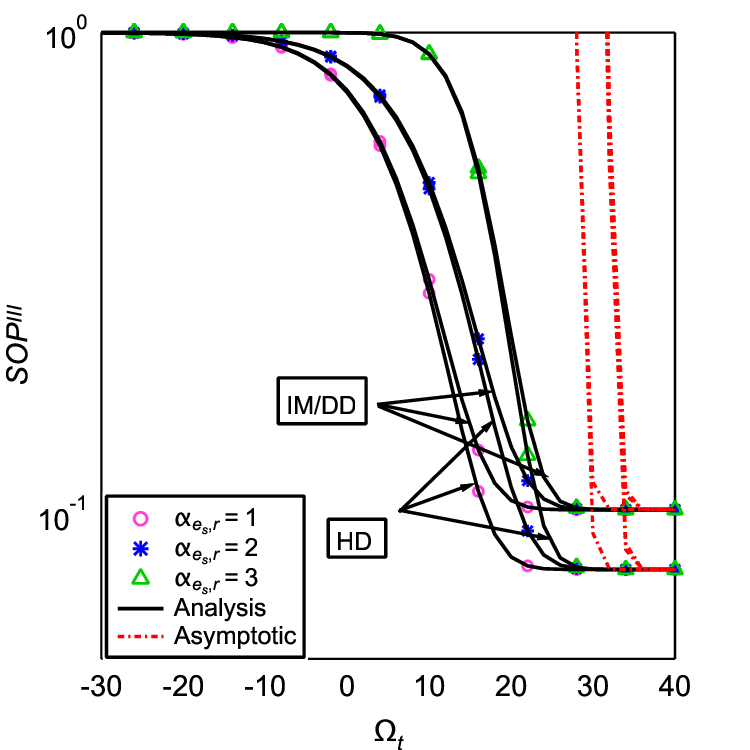}}
       \vspace{0mm}
   \caption{The $SOP^{III}$ versus $\Omega_{t}$ for chosen values of $\alpha_{e_{s},r}$ and $r_{d}$.}
   \label{p14}
\end{figure}
\begin{figure}[!ht]
\vspace{0mm}
\centerline{\includegraphics[width=0.35\textwidth,angle =0]{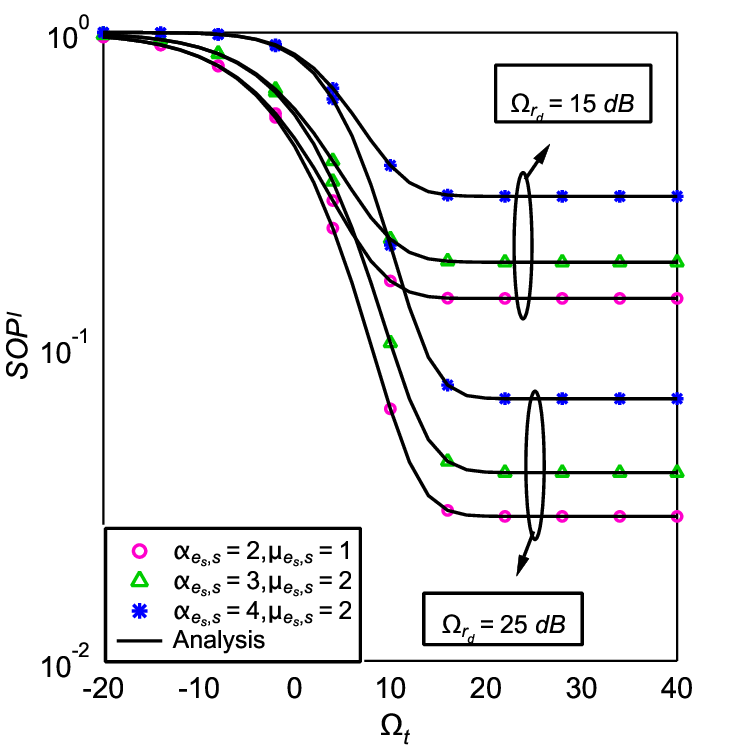}}
       \vspace{0mm}
   \caption{The $SOP^{I}$ versus $\Omega_{t}$ for chosen values of $\alpha_{e_{s},s}$, $\mu_{e_{s},s}$  and $\Omega_{r_{d}}$.}
   \label{p15}
\end{figure}
In contrast, a different outcome arises when examining the fading parameters (i.e., $\alpha_{e_{s},s}$, $\alpha_{e_{s},r}$, $\mu_{e_{s},s}$, and $\mu_{e_{s},r}$) of the $\mathcal{S}-\mathcal{I}_{{\mathcal{E}1}}-\mathcal{E}1$ link. For example, plotting $SOP^{III}$ against $\Omega_{t}$ in Fig. \ref{p14} indicates that increasing $\alpha_{e_{s},r}$ from $1$ to $3$ results in a rise in $SOP^{III}$ from $0.29$ to $0.63$ at $\Omega_{t}=6$dB. This occurs because higher $\alpha_{e_s}$ may lead to a more robust received signal at the eavesdropper's location, facilitating easier interception of the communication.
\begin{figure}[!ht]
\vspace{0mm}
\centerline{\includegraphics[width=0.35\textwidth,angle =0]{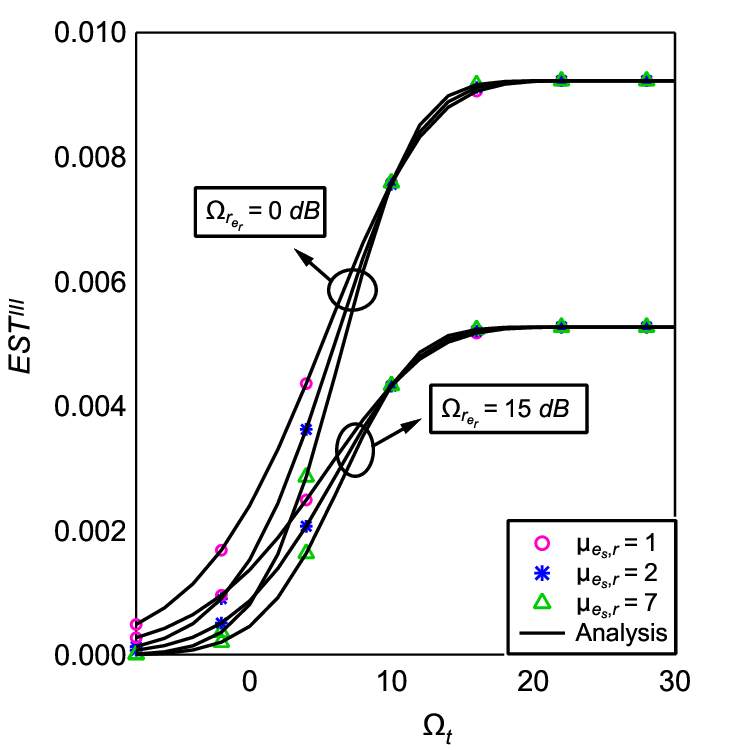}}
       \vspace{0mm}
   \caption{The $EST^{III}$ versus $\Omega_{t}$ for chosen values of $\mu_{e_{s},r}$ and $\Omega_{r_{e_{r}}}$.}
   \label{p13}
\end{figure}
Meanwhile, Fig. \ref{p15} illustrates the change in $SOP^{I}$ with $\Omega_{t}$ and demonstrates an increase in $SOP^{I}$ as $\mu_{e_{s},s}$ rises from $1$ to $2$. This outcome is anticipated as increased $\mu$ in the eavesdropper link could reduce the security margin. With reduced severity of fades and fluctuations in the eavesdropper's channel, the system becomes more susceptible to eavesdropping attempts, thereby compromising overall security. Similar results are also observed in Fig. \ref{p13}, albeit for the $\mathcal{I}_{{\mathcal{E}1}}-\mathcal{E}1$ link. Therefore, in Figure \ref{p15}, it is noticed that  
a higher value of $\Omega_{{r_{d}}}$ leads to superior security performance since the increased $\Omega_{{r_{d}}}$ strengths the $\mathcal{R}-\mathcal{I}_{{\mathcal{D}}}-\mathcal{D}$ link, hence reduces the eavesdroppers capability to intercept information.
\subsection{Impact of UOWC parameters}UOWC encounters challenges stemming from turbulence and pointing errors inherent in the underwater environment. To assess their effects on secrecy performance within the proposed model, Figs. 11-13 are showcased for demonstration.
\begin{figure}[!ht]
\vspace{0mm}
\centerline{\includegraphics[width=0.35\textwidth,angle =0]{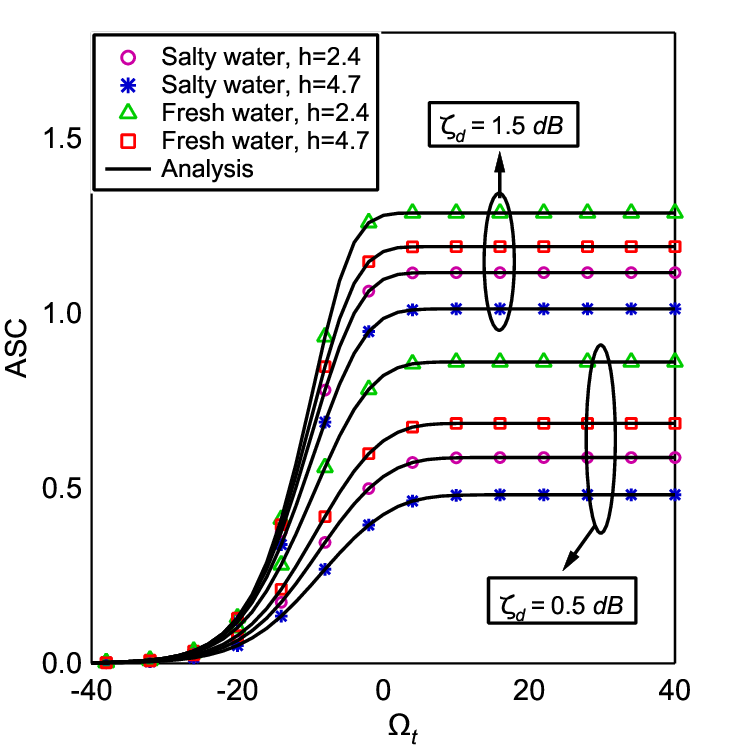}}
       \vspace{0mm}
   \caption{The ASC versus $\Omega_{t}$ for chosen values of h and $\zeta_{d}$.}
   \label{sf1}
\end{figure}
\begin{figure}[!ht]
\vspace{0mm}
\centerline{\includegraphics[width=0.35\textwidth,angle =0]{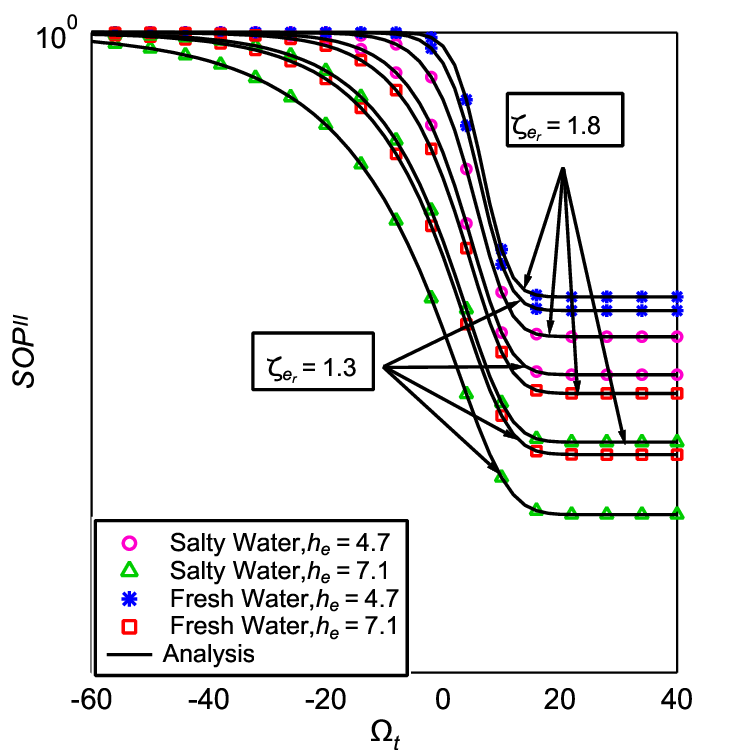}}
       \vspace{0mm}
   \caption{The $SOP^{II}$ versus $\Omega_{t}$ for chosen values of $h_{e}$ and $\zeta_{e_{r}}$.}
   \label{sf2}
\end{figure}
In UOWC systems under uniform thermal conditions, UWT is influenced by variations in both air bubble levels and water salinity. To assess these impacts, Figs. \ref{sf1}-\ref{sf2} demonstrate a graphical representation in terms of ASC and SOP secrecy metrics. The figures clearly illustrate that water salinity significantly affects the secrecy performance, with salty water posing a greater threat compared to fresh water. This outcome is anticipated due to the inherent characteristics of the two water types. Fresh water typically demonstrates lower attenuation of optical signals. On the other hand, salty water, characterized by a higher refractive index, is prone to increased light scattering, resulting in signal degradation and a reduced transmission range.

Pointing error can significantly affect the secrecy performance of optical communication systems which is shown in Figs. \ref{sf1}-\ref{sf2}. It is observed in Fig. \ref{sf1} that ASC performance becomes better when the value of $\zeta_d$ increases from $0.5$ to $1.5$. This is because increasing the value of $\zeta_d$ means the reduction of pointing error severity due to the $\mathcal{R}-\mathcal{I}_{{\mathcal{D}}}-\mathcal{D}$ link, hence, increasing pointing error parameters can enhance the security of the optical link by reducing the risk of unintentional signal leakage and interception. 
\begin{figure}[!ht]
\vspace{0mm}
\centerline{\includegraphics[width=0.35\textwidth,angle =0]{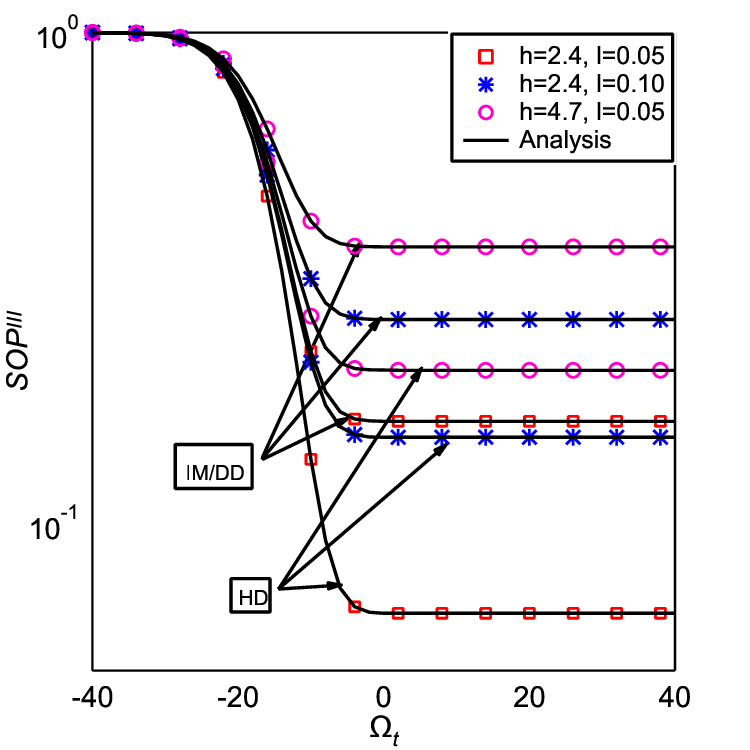}}
       \vspace{0mm}
   \caption{The $SOP^{III}$ versus $\Omega_{t}$ for chosen values of h and l.}
   \label{p8}
\end{figure}
Improved alignment ensures that the optical beams are directed precisely toward the intended receiver, minimizing the chances of unauthorized access to the transmitted data. This helps preserve the confidentiality and integrity of the communication link, particularly in sensitive or secure applications. Conversely, Figure \ref{sf2} delves into the impact of pointing error severity attributed to the $\mathcal{R}-\mathcal{I}_{{\mathcal{E}2}}-\mathcal{E}2$ link, revealing a degradation in performance with the increase of $\zeta_{e_{r}}$. This degradation occurs because a more accurately aligned eavesdropper can effectively decode the transmitted signal intended for the legitimate receiver. Consequently, the eavesdropper's capability to extract information from the transmission intensifies, thereby diminishing the level of secrecy achieved in the communication system.

 In Fig. \ref{p8}, a significant degradation in the SNR is evident with the increase in UWT parameters (i.e., air bubbles level and temperature gradient), which substantially impacts the SOP performance. This phenomenon is anticipated due to the inherent characteristics of turbulence, which induce scattering and absorption of light in the aquatic medium, consequently increasing signal attenuation. Therefore, this attenuation diminishes the optical signal strength during propagation, potentially restricting the communication link range and reliability. Consequently, the proposed model experiences a notable decrease in system performance under such conditions.

\begin{figure}[!ht]
\vspace{0mm}
\centerline{\includegraphics[width=0.35\textwidth,angle =0]{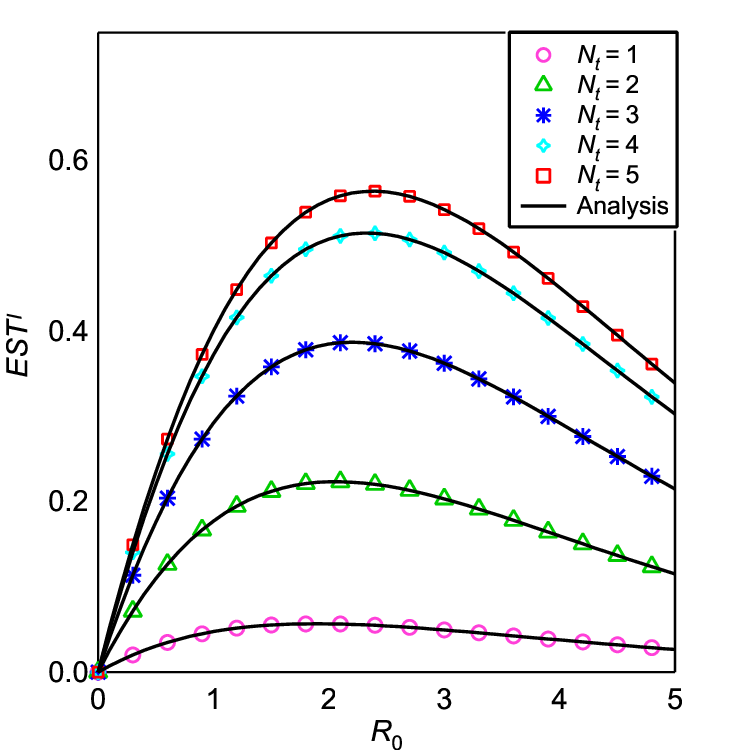}}
       \vspace{0mm}
   \caption{The $EST^{I}$ versus $R_{0}$ for chosen values of $N_{t}$.}
   \label{TR}
\end{figure}

Furthermore, Fig. \ref{p8} presents an extensive analysis comparing receiver detection methods, underscoring the superior effectiveness of HD over the IM/DD approach. The HD technique excels due to its ability to enhance receiver sensitivity and SNR by incorporating a local oscillator signal in the detection process. This integration enables improved detection of weak optical signals and mitigates noise interference. Additionally, the inherent coherence of HD preserves both the amplitude and phase information of the optical signal, contributing significantly to the performance and capacity enhancement of optical communication systems. Notably, analogous results are evident in the results depicted in Fig. \ref{p14}, further validating the advantages offered by HD in optical communication scenarios.
\subsection{Comparing with the Existing Works}

\begin{figure}[!ht]
\vspace{0mm}
\centerline{\includegraphics[width=0.35\textwidth,angle =0]{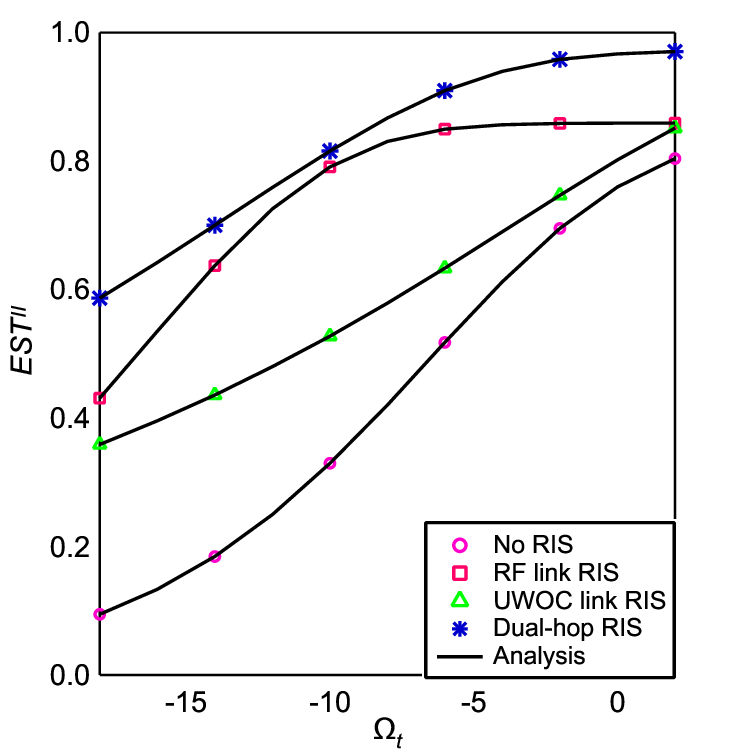}}
       \vspace{0mm}
   \caption{The $EST^{II}$ versus $\Omega_{t}$.}
   \label{e1}
\end{figure}

In Fig. \ref{TR}, the impact of target secrecy rate ($R_0$) on EST performance is observed under various $N_t$. It is evident that at lower $R_0$ values, the curve exhibits a sharp rise in EST with incremental increases in $R_0$. This notable increase is attributed to the fact that initially, achieving higher secrecy rates demands only minor adjustments or enhancements to the system. However, beyond a threshold of $3$ dB, the curve begins to display diminishing returns. At exceptionally high $R_0$ levels, the additional improvements in EST become marginal in comparison to the considerable resources and effort required to attain them. This phenomenon underscores the point at which further enhancements in secrecy rate may not be justifiable due to resource limitations \cite{new12}.

Fig. \ref{e1} illustrates the impact of RIS on system performance due to the RF-UOWC mixed model. The graphical representation reveals that integrating RIS leads to enhanced system performance, as evidenced by a notable increase in $EST^{II}$ (Scenario-$II$) when both hops leverage RIS technology, compared to scenarios where only one hop incorporates RIS or where RIS is absent entirely. This is expected because RIS optimizes system performance by facilitating signal reflection optimization, improving channel conditions, and mitigating interference, thereby increasing reliability and efficiency.
\begin{figure}[!ht]
\vspace{0mm}
\centerline{\includegraphics[width=0.35\textwidth,angle =0]{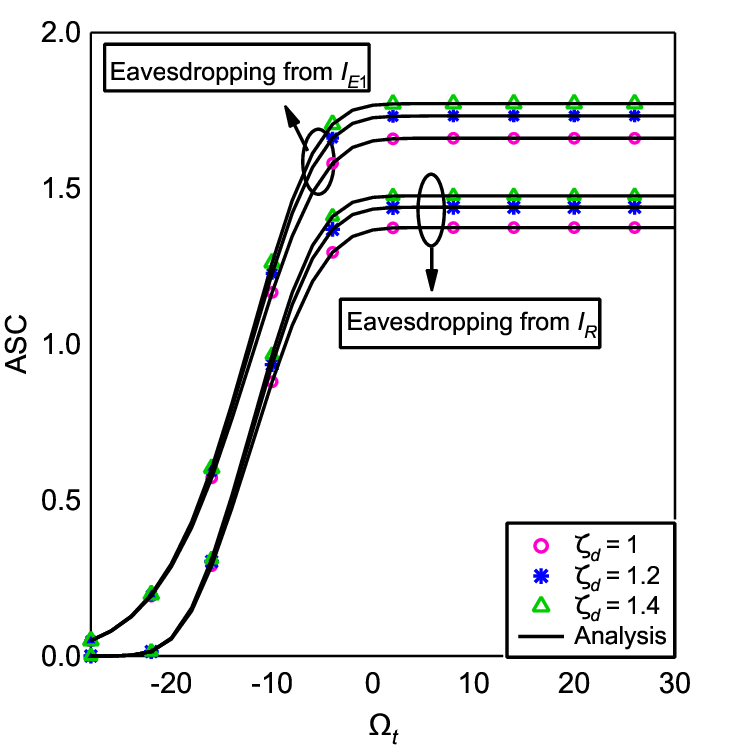}}
       \vspace{0mm}
   \caption{The ASC versus $\Omega_{t}$ for chosen values of $\zeta_{d}$}
   \label{e2}
\end{figure}
Furthermore, it is observed that integrating RIS into the RF link yields superior system performance compared to the UOWC link. This disparity arises from the fact that RF channels often face substantial challenges such as path loss, multipath fading, and interference, which can significantly degrade received signal quality and restrict coverage range. Deploying RIS in the RF channel enables dynamic manipulation of the propagation environment to surmount these obstacles by amplifying signal strength, compensating for path loss, and mitigating interference.
Despite the prevailing focus of current RF-Optical mixed models on utilizing RIS in a single hop only \cite{art16, ahmed2023enhancing, new9, art21, new10}, our proposed model demonstrates superior secrecy performance by incorporating RIS in both hops.

\begin{figure}[!ht]
\vspace{0mm}
\centerline{\includegraphics[width=0.35\textwidth,angle =0]{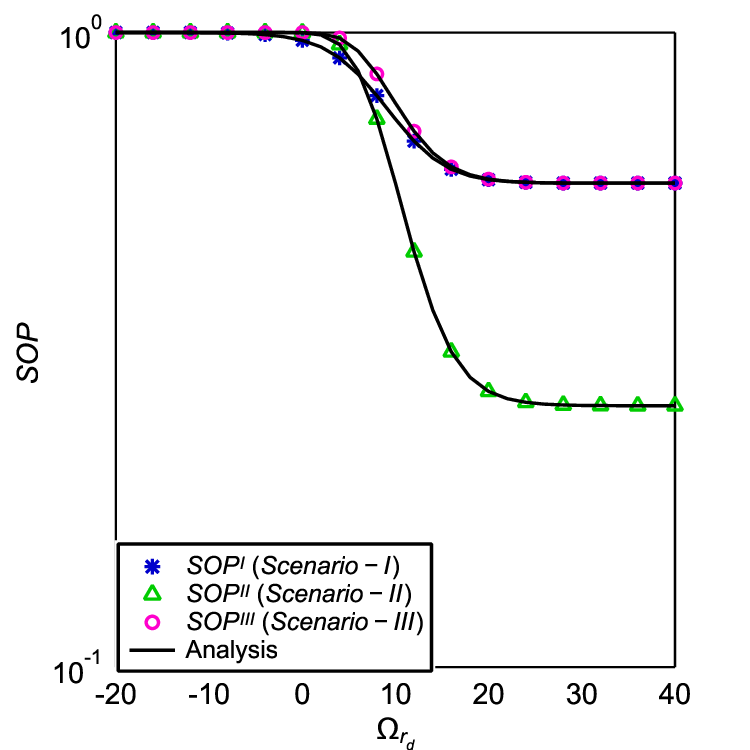}}
       \vspace{0mm}
   \caption{The SOP versus $\Omega_{r_{d}}$ for different eavesdropping scenarios.}
   \label{e3}
\end{figure}

Fig. \ref{e2} exhibits a comparative assessment of potential eavesdropping vulnerabilities, illustrating the relationship between ASC and $\Omega_{t}$ under the scenario where the eavesdropper attempts to compromise confidential information from both the main channel RIS (i.e., $\mathcal{I}_{\mathcal{R}}$) and the eavesdropper's own RIS (i.e., $\mathcal{I}_{\mathcal{E}1}$). The results demonstrate that ASC is significantly lower when the eavesdropper leverages $\mathcal{I}_{\mathcal{R}}$ for eavesdropping compared to using a separate RIS, $\mathcal{I}_{\mathcal{E}1}$, translating to a heightened risk of unauthorized access and exploitation of sensitive information. This observation is attributed to the eavesdropper's ability to directly access the reflections from $\mathcal{I}_{\mathcal{R}}$, leading to a substantially stronger signal compared to a separate RIS configuration. Furthermore, the influence of pointing errors in the UOWC link (i.e., $\zeta_{d}$) on ASC performance is examined in the figure. It is evident from the graph that as $\zeta_{d}$ values increase, the UOWC link exhibits enhanced secrecy performance by decreasing the risk of eavesdropping. This improvement mitigates pointing errors and facilitates a more accurate alignment between the transmitted optical beam and the receiver. Previous studies typically considered a scenario where eavesdroppers could only intercept information through the main channel RIS \cite{new11,art16,ahmed2023enhancing,new9}. In this work, we propose a more generalized model where eavesdroppers can exploit information using different RIS. This model offers greater versatility, as it can be reduced to the existing model under specific conditions, as describen in Remark 4.

Fig. \ref{e3} illustrates a comparative analysis of various eavesdropping scenarios, depicting the SOP against $\Omega_{r_{d}}$. The results underscore a notably increased severity in Scenario-$III$, where both eavesdroppers possess the capability to intercept information from both RF and UWOC links, simultaneously. Conversely, Scenario-$I$ exhibits greater vulnerability compared to Scenario-$II$. This discrepancy arises from the inherent characteristics of RF and optical signals: RF signals, being broadcasted, are accessible from a wider range of locations, while optical signals necessitate a direct line of sight between the transmitter and receiver. This inherent limitation makes intercepting optical transmissions more challenging for eavesdroppers.


\section{CONCLUSION}
\label{con}
In this paper, the secrecy performance of a RIS-assisted mixed RF-UOWC system was studied under both RF and UOW eavesdropping attack. To assess the influence of system parameter, novel closed-form expressions of ASC, SOP, SPSC, and EST performance were derived. The analytical expressions was validated through Monte Carlo simulations. Further, asymptotic analyses yielded insights into system behavior under high-SNR conditions. Numerical results demonstrated that increasing RIS elements, mitigating fading and pointing errors in the main channel, and employing the HD detection at the receiver for UOWC link significantly enhance secrecy performance. Conversely, factors like strong eavesdropper channels and severe underwater turbulence (air bubbles, temperature gradients) negatively impact secrecy. Notably, simultaneous eavesdropping on both RF and UOWC links poses the greatest security challenge.

\bibliographystyle{IEEEtran}
\bibliography{IEEEabrv,main}

\begin{thebibliography}{10}
\providecommand{\url}[1]{#1}
\csname url@samestyle\endcsname
\providecommand{\newblock}{\relax}
\providecommand{\bibinfo}[2]{#2}
\providecommand{\BIBentrySTDinterwordspacing}{\spaceskip=0pt\relax}
\providecommand{\BIBentryALTinterwordstretchfactor}{4}
\providecommand{\BIBentryALTinterwordspacing}{\spaceskip=\fontdimen2\font plus
\BIBentryALTinterwordstretchfactor\fontdimen3\font minus \fontdimen4\font\relax}
\providecommand{\BIBforeignlanguage}[2]{{%
\expandafter\ifx\csname l@#1\endcsname\relax
\typeout{** WARNING: IEEEtran.bst: No hyphenation pattern has been}%
\typeout{** loaded for the language `#1'. Using the pattern for}%
\typeout{** the default language instead.}%
\else
\language=\csname l@#1\endcsname
\fi
#2}}
\providecommand{\BIBdecl}{\relax}
\BIBdecl

\bibitem{art14}
L.~B. Kumar, R.~P. Naik, P.~Krishnan, A.~A.~B. Raj, A.~K. Majumdar, and W.-Y. Chung, ``Ris assisted triple-hop {RF-FSO} convergent with {UWOC} system,'' \emph{IEEE Access}, vol.~10, pp. 66\,564--66\,575, 2022.

\bibitem{art16}
T.~Hossain, S.~Shabab, A.~Badrudduza, M.~K. Kundu, and I.~S. Ansari, ``On the physical layer security performance over {RIS}-aided dual-hop {RF-UOWC} mixed network,'' \emph{IEEE Transactions on Vehicular Technology}, 2022.

\bibitem{art17}
I.~S. Ansari, L.~Jan, Y.~Tang, L.~Yang, and M.~H. Zafar, ``Outage and error analysis of dual-hop {TAS/MRC} {MIMO} {RF-UOWC} systems,'' \emph{IEEE Transactions on Vehicular Technology}, vol.~70, no.~10, pp. 10\,093--10\,104, 2021.

\bibitem{art42}
A.~D. Wyner, ``The wire-tap channel,'' \emph{Bell system technical journal}, vol.~54, no.~8, pp. 1355--1387, 1975.

\bibitem{art43}
H.~Yu, T.~Kim, and H.~Jafarkhani, ``Wireless secure communication with beamforming and jamming in time-varying wiretap channels,'' \emph{IEEE Transactions on Information Forensics and Security}, vol.~13, no.~8, pp. 2087--2100, 2018.

\bibitem{art44}
H.~Yu and I.-G. Lee, ``Physical layer security based on {NOMA} and {AJ} for {MISOSE} channels with an untrusted relay,'' \emph{Future Generation Computer Systems}, vol. 102, pp. 611--618, 2020.

\bibitem{lo6}
M.~Mitev, A.~Chorti, H.~V. Poor, and G.~P. Fettweis, ``What physical layer security can do for {6G} security,'' \emph{IEEE Open Journal of Vehicular Technology}, vol.~4, pp. 375--388, 2023.

\bibitem{art30}
H.~Kaushal and G.~Kaddoum, ``Underwater optical wireless communication,'' \emph{IEEE access}, vol.~4, pp. 1518--1547, 2016.

\bibitem{art31}
Z.~Zeng, S.~Fu, H.~Zhang, Y.~Dong, and J.~Cheng, ``A survey of underwater optical wireless communications,'' \emph{IEEE communications surveys \& tutorials}, vol.~19, no.~1, pp. 204--238, 2016.

\bibitem{art32}
E.~Illi, F.~El~Bouanani, D.~B. Da~Costa, F.~Ayoub, and U.~S. Dias, ``Dual-hop mixed {RF-UOW} communication system: A {PHY} security analysis,'' \emph{IEEE Access}, vol.~6, pp. 55\,345--55\,360, 2018.

\bibitem{lo7}
H.~Lei, C.~Zhu, I.~S. Ansari, K.-H. Park, and G.~Pan, ``Secrecy outage performance analysis of dual-hop {RF-UOWC} systems,'' \emph{IEEE Systems Journal}, 2022.

\bibitem{art5}
A.~Badrudduza, M.~Ibrahim, S.~R. Islam, M.~S. Hossen, M.~K. Kundu, I.~S. Ansari, and H.~Yu, ``Security at the physical layer over gg fading and megg turbulence induced rf-uowc mixed system,'' \emph{IEEE Access}, vol.~9, pp. 18\,123--18\,136, 2021.

\bibitem{art41}
M.~Ibrahim, A.~Badrudduza, M.~S. Hossen, M.~K. Kundu, and I.~S. Ansari, ``Enhancing security of {TAS/MRC}-based mixed {RF-UOWC} system with induced underwater turbulence effect,'' \emph{IEEE Systems Journal}, vol.~16, no.~4, pp. 5584--5595, 2021.

\bibitem{new4}
Y.~Lou, R.~Sun, J.~Cheng, D.~Nie, and G.~Qiao, ``Secrecy outage analysis of two-hop decode-and-forward mixed {RF/UWOC} systems,'' \emph{IEEE Communications Letters}, vol.~26, no.~5, pp. 989--993, 2021.

\bibitem{art18}
L.~Yang, Y.~Yang, M.~O. Hasna, and M.-S. Alouini, ``Coverage, probability of {SNR} gain, and {DOR} analysis of {RIS}-aided communication systems,'' \emph{IEEE Wireless Communications Letters}, vol.~9, no.~8, pp. 1268--1272, 2020.

\bibitem{art19}
Y.~Zhang, J.~Zhang, M.~Di~Renzo, H.~Xiao, and B.~Ai, ``Performance analysis of {RIS}-aided systems with practical phase shift and amplitude response,'' \emph{IEEE Transactions on Vehicular Technology}, vol.~70, no.~5, pp. 4501--4511, 2021.

\bibitem{art20}
D.~Selimis, K.~P. Peppas, G.~C. Alexandropoulos, and F.~I. Lazarakis, ``On the performance analysis of {RIS}-empowered communications over {N}akagami-m fading,'' \emph{IEEE Communications Letters}, vol.~25, no.~7, pp. 2191--2195, 2021.

\bibitem{art21}
L.~Yang, J.~Yang, W.~Xie, M.~O. Hasna, T.~Tsiftsis, and M.~Di~Renzo, ``Secrecy performance analysis of {RIS}-aided wireless communication systems,'' \emph{IEEE Transactions on Vehicular Technology}, vol.~69, no.~10, pp. 12\,296--12\,300, 2020.

\bibitem{new5}
A.~K. Yadav, S.~Yadav, M.~K. Shukla, D.~S. Gurjar, and X.~Li, ``Secrecy performance analysis of {RIS}-enabled wireless networks over {R}ayleigh fading channels,'' in \emph{2021 Advanced Communication Technologies and Signal Processing (ACTS)}.\hskip 1em plus 0.5em minus 0.4em\relax IEEE, 2021, pp. 1--6.

\bibitem{new6}
A.~K. Yadav, S.~Yadav, A.~Pandey, and A.~Silva, ``On the secrecy performance of {RIS}-enabled wireless communications over {N}akagami-m fading channels,'' \emph{ICT Express}, vol.~9, no.~3, pp. 452--458, 2023.

\bibitem{new7}
W.~Shi, J.~Xu, W.~Xu, C.~Yuen, A.~L. Swindlehurst, and C.~Zhao, ``On secrecy performance of {RIS}-assisted {MISO} systems over rician channels with spatially random eavesdroppers,'' \emph{IEEE Transactions on Wireless Communications}, 2024.

\bibitem{lo2}
Y.~Ai, A.~Felipe, L.~Kong, M.~Cheffena, S.~Chatzinotas, and B.~Ottersten, ``Secure vehicular communications through reconfigurable intelligent surfaces,'' \emph{IEEE Transactions on Vehicular Technology}, vol.~70, no.~7, pp. 7272--7276, 2021.

\bibitem{lo3}
J.~Zhang, H.~Du, Q.~Sun, B.~Ai, and D.~W.~K. Ng, ``Physical layer security enhancement with reconfigurable intelligent surface-aided networks,'' \emph{IEEE Transactions on Information Forensics and Security}, vol.~16, pp. 3480--3495, 2021.

\bibitem{ahmed2023enhancing}
T.~Ahmed, A.~Badrudduza, S.~R. Islam, S.~H. Islam, M.~Ibrahim, M.~Abdullah-Al-Wadud, and I.~S. Ansari, ``Enhancing physical layer secrecy performance for {RIS}-assisted {RF-FSO} mixed wireless system,'' \emph{IEEE Access}, vol.~11, pp. 127\,737--127\,753, 2023.

\bibitem{new9}
\BIBentryALTinterwordspacing
Y.~Wang, Y.~Wang, and W.~Lu, ``Secrecy performance analysis of mixed {RF/FSO} systems based on {RIS} reflection interference eavesdropper,'' \emph{Photonics}, vol.~10, no.~11, 2023. [Online]. Available: \url{https://www.mdpi.com/2304-6732/10/11/1193}
\BIBentrySTDinterwordspacing

\bibitem{new10}
M.~M. Rahman, A.~Badrudduza, N.~A. Sarker, M.~Ibrahim, I.~S. Ansari, and H.~Yu, ``Ris-aided mixed {RF-FSO} wireless networks: Secrecy performance analysis with simultaneous eavesdropping,'' \emph{IEEE Access}, vol.~11, pp. 126\,507--126\,523, 2023.

\bibitem{new11}
M.~Wu, D.~Wang, L.~Min, Y.~He, C.~Gao, L.~Zhen, and K.~Yu, ``Secrecy performance analysis of {RIS}-aided hybrid {RF/FSO} networks,'' in \emph{GLOBECOM 2023-2023 IEEE Global Communications Conference}.\hskip 1em plus 0.5em minus 0.4em\relax IEEE, 2023, pp. 1663--1668.

\bibitem{art22}
L.~Yang, X.~Yan, D.~B. Da~Costa, T.~A. Tsiftsis, H.-C. Yang, and M.-S. Alouini, ``Indoor mixed dual-hop {VLC/RF} systems through reconfigurable intelligent surfaces,'' \emph{IEEE Wireless Communications Letters}, vol.~9, no.~11, pp. 1995--1999, 2020.

\bibitem{art23}
L.~Yang, X.~Yan, S.~Li, D.~B. da~Costa, and M.-S. Alouini, ``Performance analysis of dual-hop mixed {PLC/RF} communication systems,'' \emph{IEEE Systems Journal}, vol.~16, no.~2, pp. 2867--2878, 2021.

\bibitem{art26}
A.~M. Salhab and L.~Yang, ``Mixed {RF/FSO} relay networks: {RIS}-equipped {RF} source vs {RIS}-aided {RF} source,'' \emph{IEEE Wireless Communications Letters}, vol.~10, no.~8, pp. 1712--1716, 2021.

\bibitem{art27}
A.~Sikri, A.~Mathur, P.~Saxena, M.~R. Bhatnagar, and G.~Kaddoum, ``Reconfigurable intelligent surface for mixed {FSO-RF} systems with co-channel interference,'' \emph{IEEE Communications Letters}, vol.~25, no.~5, pp. 1605--1609, 2021.

\bibitem{art28}
L.~Yang, F.~Meng, J.~Zhang, M.~O. Hasna, and M.~Di~Renzo, ``On the performance of {RIS}-assisted dual-hop {UAV} communication systems,'' \emph{IEEE Transactions on Vehicular Technology}, vol.~69, no.~9, pp. 10\,385--10\,390, 2020.

\bibitem{art29}
L.~Yang, W.~Guo, and I.~S. Ansari, ``Mixed dual-hop {FSO-RF} communication systems through reconfigurable intelligent surface,'' \emph{IEEE Communications Letters}, vol.~24, no.~7, pp. 1558--1562, 2020.

\bibitem{10413214}
M.~A. Rakib, M.~Ibrahim, A.~S.~M. Badrudduza, I.~S. Ansari, S.~Chakravarty, I.~Ahmed, and S.~M.~A. Razzak, ``A {RIS} empowered {THz-UWO} relay system for air-to-underwater mixed network: Performance analysis with pointing errors,'' \emph{IEEE Internet of Things Journal}, pp. 1--1, 2024.

\bibitem{art24}
S.~Li, L.~Yang, D.~B. da~Costa, M.~Di~Renzo, and M.-S. Alouini, ``On the performance of {RIS}-assisted dual-hop mixed {RF-UWOC} systems,'' \emph{IEEE Transactions on Cognitive Communications and Networking}, vol.~7, no.~2, pp. 340--353, 2021.

\bibitem{lo4}
X.~Pan, H.~Ran, G.~Pan, Y.~Xie, and J.~Zhang, ``On secrecy analysis of {DF} based dual hop mixed {RF-FSO} systems,'' \emph{IEEE Access}, vol.~7, pp. 66\,725--66\,730, 2019.

\bibitem{lo5}
A.~Jindal, C.~Kundu, and R.~Bose, ``Secrecy outage of dual-hop {AF} relay system with relay selection without eavesdropper's {CSI},'' \emph{IEEE Communications Letters}, vol.~18, no.~10, pp. 1759--1762, 2014.

\bibitem{new1}
P.~Bhardwaj and S.~M. Zafaruddin, ``Performance of dual-hop relaying for {THz-RF} wireless link over asymmetrical $\alpha-\mu$ fading,'' \emph{IEEE Transactions on Vehicular Technology}, vol.~70, no.~10, pp. 10\,031--10\,047, 2021.

\bibitem{new3}
L.~B. Kumar, R.~P. Naik, P.~Krishnan, A.~A.~B. Raj, A.~K. Majumdar, and W.-Y. Chung, ``{RIS} assisted triple-hop {RF-FSO} convergent with {UWOC} system,'' \emph{IEEE Access}, vol.~10, pp. 66\,564--66\,575, 2022.

\bibitem{zedini2019unified}
E.~Zedini, H.~M. Oubei, A.~Kammoun, M.~Hamdi, B.~S. Ooi, and M.-S. Alouini, ``Unified statistical channel model for turbulence-induced fading in underwater wireless optical communication systems,'' \emph{IEEE Transactions on Communications}, vol.~67, no.~4, pp. 2893--2907, 2019.

\bibitem{d1}
M.~R.~A. Ruku, M.~Ibrahim, A.~Badrudduza, I.~S. Ansari, W.~Khalid, and H.~Yu, ``Effects of co-channel interference on {RIS} empowered wireless networks amid multiple eavesdropping attempts,'' \emph{ICT Express}, 2023.

\bibitem{ibrahim2023effective}
M.~Ibrahim, A.~Badrudduza, M.~S. Hossen, M.~K. Kundu, I.~S. Ansari, and I.~Ahmed, ``On effective secrecy throughput of underlay spectrum sharing $\alpha-\mu $/{M}{\'a}laga hybrid model under interference-and-transmit power constraints,'' \emph{IEEE Photonics Journal}, vol.~15, no.~2, pp. 1--13, 2023.

\bibitem{n2}
L.~Yang, M.~O. Hasna, and I.~S. Ansari, ``Physical layer security for {TAS/MRC} systems with and without co-channel interference over $\eta$--$\mu$ fading channels,'' \emph{IEEE Transactions on Vehicular Technology}, vol.~67, no.~12, pp. 12\,421--12\,426, 2018.

\bibitem{art7}
A.~P. Prudnikov, A.~Brychkov, and O.~I. Marichev, \emph{Integrals and series: special functions}.\hskip 1em plus 0.5em minus 0.4em\relax CRC Press, 1986, vol.~2.

\bibitem{n1}
I.~S. Ansari, F.~Yilmaz, and M.-S. Alouini, ``Performance analysis of free-space optical links over {M}{\'a}laga ($\mathcal {M}$) turbulence channels with pointing errors,'' \emph{IEEE Transactions on Wireless Communications}, vol.~15, no.~1, pp. 91--102, 2015.

\bibitem{art6}
S.~H. Islam, A.~Badrudduza, S.~R. Islam, F.~I. Shahid, I.~S. Ansari, M.~K. Kundu, S.~K. Ghosh, M.~B. Hossain, A.~S. Hosen, and G.~H. Cho, ``On secrecy performance of mixed generalized {Gamma} and {M}{\'a}laga {RF-FSO} variable gain relaying channel,'' \emph{IEEE Access}, vol.~8, pp. 104\,127--104\,138, 2020.

\bibitem{art8}
M.~D. Springer, ``The algebra of random variables,'' Tech. Rep., 1979.

\bibitem{art9}
P.~Mittal and K.~Gupta, ``An integral involving generalized function of two variables,'' in \emph{Proceedings of the Indian academy of sciences-section A}, vol.~75, no.~3.\hskip 1em plus 0.5em minus 0.4em\relax Springer, 1972, pp. 117--123.

\bibitem{art10}
N.~A. Sarker, A.~Badrudduza, S.~R. Islam, S.~H. Islam, I.~S. Ansari, M.~K. Kundu, M.~F. Samad, M.~B. Hossain, and H.~Yu, ``Secrecy performance analysis of mixed hyper-gamma and gamma-gamma cooperative relaying system,'' \emph{IEEE Access}, vol.~8, pp. 131\,273--131\,285, 2020.

\bibitem{art1}
I.~S. Gradshteyn and I.~M. Ryzhik, \emph{Table of integrals, series, and products}.\hskip 1em plus 0.5em minus 0.4em\relax Academic press, 2014.

\bibitem{art11}
N.~A. Sarker, A.~Badrudduza, S.~R. Islam, S.~H. Islam, M.~K. Kundu, I.~S. Ansari, and K.-S. Kwak, ``On the intercept probability and secure outage analysis of mixed ($\alpha$--$\kappa$--$\mu$)-shadowed and {M}{\'a}laga turbulent models,'' \emph{IEEE Access}, vol.~9, pp. 133\,849--133\,860, 2021.

\bibitem{art12}
S.~H. Islam, A.~Badrudduza, S.~R. Islam, F.~I. Shahid, I.~S. Ansari, M.~K. Kundu, and H.~Yu, ``Impact of correlation and pointing error on secure outage performance over arbitrary correlated {N}akagami-$ m $ and $\mathcal{M}$-turbulent fading mixed {RF-FSO} channel,'' \emph{IEEE Photonics Journal}, vol.~13, no.~2, pp. 1--17, 2021.

\bibitem{art13}
H.~Lei, H.~Luo, K.-H. Park, I.~S. Ansari, W.~Lei, G.~Pan, and M.-S. Alouini, ``On secure mixed {RF-FSO} systems with {TAS} and imperfect {CSI},'' \emph{IEEE transactions on communications}, vol.~68, no.~7, pp. 4461--4475, 2020.

\bibitem{new12}
M.~K. Ghosh, M.~K. Kundu, M.~Ibrahim, A.~Badrudduza, M.~S. Anower, I.~S. Ansari, A.~Solomon, S.~Chakravarty, I.~Ahmed, and H.~Yu, ``Physical layer security in mixed {UOWC-RF} networks with energy harvesting relay against multiple eavesdroppers,'' \emph{IEEE Open Journal of the Communications Society}, 2024.

\end{thebibliography}

\end{document}